\def\BibTeX{{\rm B\kern-.05em{\sc i\kern-.025em b}\kern-.08em
    T\kern-.1667em\lower.7ex\hbox{E}\kern-.125emX}}
\newcommand*{\circled}[1]{\lower.7ex\hbox{\tikz\draw (0pt, 0pt)%
    circle (.5em) node {\makebox[1em][c]{\small #1}};}}
\newcommand\cmd[1]{\texttt{\small#1}}
\theoremstyle{plain}
\newtheorem{theorem}{Theorem}
\newtheorem{lemma}{Lemma}
\newtheorem*{proof}{Proof}
\theoremstyle{definition}
\newtheorem{definition}{Definition}
\theoremstyle{remark}
\definecolor{mygreen}{rgb}{0,0.6,0}
\definecolor{mygray}{rgb}{0.5,0.5,0.5}
\definecolor{mymauve}{rgb}{0.58,0,0.82}
\newcolumntype{I}{!{\vrule width 1.2pt}}
\newlength\savedwidth
\newlength\savewidth
\newcommand\shline{\noalign{\global\savewidth\arrayrulewidth
                           \global\arrayrulewidth 1.2pt}%
                  \hline
                  \noalign{\global\arrayrulewidth\savewidth}}
\begin{document}
\begin{CJK*}{UTF8}{gbsn}

\title{
Rethinking NPN Classification from Face and Point Characteristics of Boolean Functions
\thanks{\IEEEauthorrefmark{1}These authors contributed equally to this work.}
}
\author{\IEEEauthorblockN{
Jiaxi~Zhang$^{1}$\IEEEauthorrefmark{1},
Shenggen~Zheng$^{2}$\IEEEauthorrefmark{1}, 
Liwei~Ni$^{2,3}$\IEEEauthorrefmark{1},
Huawei~Li$^{3,4}$
and~Guojie~Luo$^{1}$
}
\IEEEauthorblockA{
$^1$Center for Energy-Efficient Computing and Applications, Peking University, Beijing, China}
\IEEEauthorblockA{
$^2$Peng Cheng Laboratory, Shenzhen, China
}
\IEEEauthorblockA{
$^3$University of Chinese Academy of Sciences, Beijing, China
}
\IEEEauthorblockA{
$^4$Institute of Computing Technology, Chinese Academy of Sciences, Beijing, China
}

\IEEEauthorblockA{
Email: zhangjiaxi@pku.edu.cn, 
zhengshg@pcl.ac.cn,
nlwmode@gmail.com,
lihuawei@ict.ac.cn,
gluo@pku.edu.cn
}}

\maketitle

\begin{abstract}

NPN classification is an essential problem in the design and verification of digital circuits.
Most existing works explored variable symmetries and cofactor signatures to develop their classification methods.
However, cofactor signatures only consider the face characteristics of Boolean functions.
In this paper, we propose a new NPN classifier using both face and point characteristics of Boolean functions, including cofactor, influence, and sensitivity.
The new method brings a new perspective to the classification of Boolean functions.
The classifier only needs to compute some signatures, and the equality of corresponding signatures is a prerequisite for NPN equivalence.
Therefore, these signatures can be directly used for NPN classification, thus avoiding the exhaustive transformation enumeration.
The experiments show that the proposed NPN classifier gains better NPN classification accuracy with comparable speed.

\end{abstract}


\begin{IEEEkeywords}
NPN classifier, cofactor, influence, sensitivity
\end{IEEEkeywords}

\section{Introduction}

Classification of Boolean functions groups a set of Boolean functions into equivalent classes.
Negation-Permutation-Negation~(NPN) equivalence is the most frequently used one regarding the transformations of input negation, input permutation, and output negation.
It has significant applications in logic synthesis, technology mapping, and verification.
Boolean functions that are NPN equivalent define an NPN equivalence class.

Boolean matching and classification have been widely studied in the past decades.
Previous methods could be classified into three categories: algorithms based on Boolean satisfiability~(SAT), algorithms utilizing search with signature pruning, and algorithms based on canonical forms.
SAT-based methods can handle Boolean functions with a large number of input variables, but they are slow because of the NP-completeness~\cite{soeken2016heuristic}.
Algorithms utilizing search with signature pruning are usually used for pair-wise matching.
A signature of a Boolean function is a compact representation that characterizes some intrinsic structures and serves as a necessary condition for Boolean matching.
Signatures derived from row sums~\cite{chai2006building} and cofactor~\cite{chai2006building,abdollahi2008symmetry,agosta2009transform,zhou2019fast} are two common types that have been explored extensively.
Zhang {\it et al.}~\cite{zhang2021enhanced} explore sensitivity signatures for further pruning.
Spectra like Walsh~\cite{clarke1993spectral} and Haar~\cite{thornton2002logic} have also been used as signatures for Boolean matching. 

Algorithms based on canonical forms are best manifested in NPN classification.
A canonical form is a representative of NPN equivalent Boolean functions, and two functions match if and only if their canonical forms are identical.
They work by designing a complete and unique canonical form of the Boolean functions and then try computing the canonical form for each Boolean function to check for NPN equivalence.
Many works construct canonical form considering phase assignment~\cite{tsai1997boolean}, variable symmetries~\cite{abdollahi2005new,zhang2019efficient} and high-order symmetries~\cite{kravets2000generalized,huang2013fast,zhou2019fast}.
Abdollahi {\it et al.}~\cite{abdollahi2008symmetry} utilize cofactor signatures to define signature-based canonical forms.
Zhou {\it et al.}~\cite{zhou2020fast} combine cofactor signatures and different types of symmetries to design hybrid canonical forms.

From the hypercube view of the Boolean functions, the cofactors only include the face characteristics of the hypercube. 
When designing an NPN classifier, it is challenging to ensure classification accuracy if only the face characteristics are considered.
Exhaustive transformation enumerations are always required to improve classification accuracy or achieve exact classification.
In this paper, we develop a new NPN classification method that considers both face characteristics and point characteristics of Boolean functions, which are sensitivity~\cite{cook1982bounds} and influence~\cite{kahn1988influence}.  
Intuitively, a cofactor considers a face of the hypercube and counts how many points take the same value in the face,  while sensitivity considers a point of the hypercube and counts up how many adjacent points take a different value with the point.   
The main contributions are summarized as the following: 
\begin{itemize}
    \item We introduce Boolean sensitivity and influence into NPN classification. 
    We deeply analyze the relationship between these two characteristics and the cofactor to illustrate their different properties.
    They bring a new perspective to the NPN classification.
    \item We design some signature vectors based on these two characteristics and give some proofs for these vectors to guide NPN equivalence checking.
    \item We develop a new NPN classifier based on these signature vectors.
    After signatures computation, the classifier can directly get NPN classes without transformation enumeration.
    Meanwhile, the classifier has stable runtime; it does not suffer from variance for different Boolean function sets.
\end{itemize}


\section{Face and Point Characteristics}
\label{sec:background}
This section shows some commonly used notations and concepts and then introduces the three characteristics used in our NPN classifier.

\subsection{Notations and Basic Concepts}
\label{sec:basicconcept}

An $n$-variable Boolean function, $f(X): \{0,1\}^n \rightarrow \{0,1\}$, maps a binary word $X=(x_1,x_2,\cdots,x_n)$ of width $n$ into a single binary value.
A variable $x_i$ or its complement $\bar{x_i}$ in $f$ is called a literal, and $i$ denoted the index.
A \emph{minterm} is a conjunction of $n$ literals of different variables.

Truth table $T(f)$, a binary string of $2^n$ bits, is a commonly-used representation of Boolean function $f$.
The $i$-th bit of $T(f)$ is equal to $f((i)_2)$, where $(i)_2$ is the little-endian binary code of integer $i$.
A subgraph of a hypercube can also represent a Boolean function.
The hypercube $Q_n$ is a graph of order $2^n$, whose vertices include all minterms and whose edges connect vertices that differ in exactly one variable.
Boolean function $f$ can be represented as the induced subgraph of $Q_n$.
Fig.~\ref{fig:func1} shows the hypercube $Q_3$, and the $\bullet$ nodes, as well as edges between these nodes, construct the induced subgraph of a 3-Majority logic.

An NP transformation of a Boolean function is composed of variables negations and permutations.
Negation, denoted as $\neg$, replaces a variable by its complement~(e.g., $x_1 \to \neg x_1$).
Besides, we denote $(\neg)$ as a selective negation.
For simplicity, we denote $(\neg)X=(\neg)x_1(\neg)x_2\cdots(\neg)x_n$ to describe the selective negation of the word $X$ (e.g., for $(\neg)(x_1x_2) = x_1\overline{x_2}$, we have $(\neg)x_1 = x_1$ and $(\neg)x_2 = \overline{x_2}$).
Permutation, denoted as $\pi$, is a reorder of variables~(e.g., $\pi(x_1x_2) = x_2x_1$).
Besides, we denote $X_{(i)}$ as the $i$-th minterm, and $X^i$ as negating the $i$-th variable in $X$.

The \emph{satisfy count} of a function $f$ is the number of nodes in the induced subgraph, denoted as $|f|$.
An $n$-input $f$ is called \emph{balanced} if $|f|=|\bar{f}|=2^{n-1}$.
Fig.~\ref{fig:npnequ} shows three balanced 3-inputs Boolean functions.

\begin{figure}[tbp]
\centering
\begin{subfigure}[b]{0.32\columnwidth}
\centering
\includegraphics[scale=0.30]{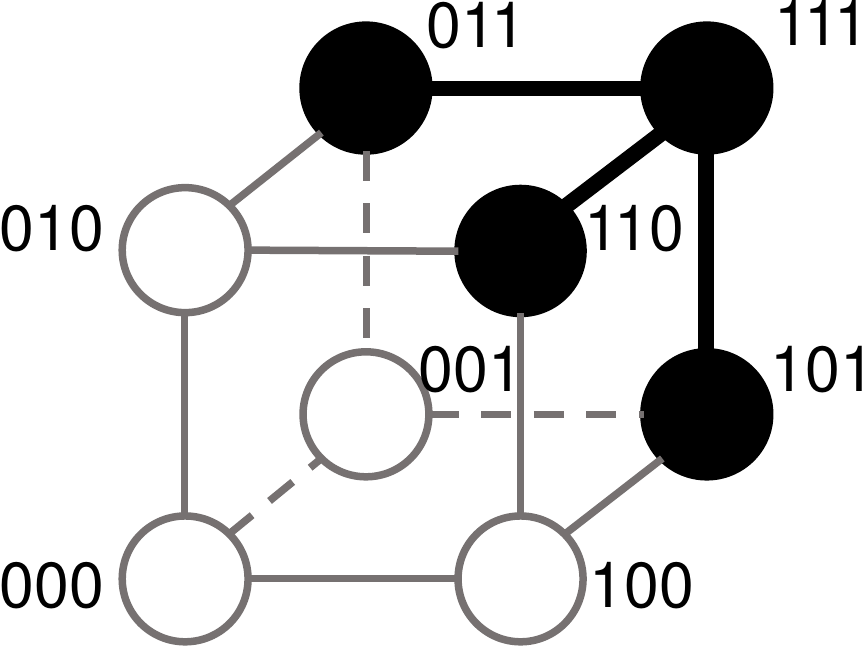}
\caption{3-Majority $f_1$.}
\label{fig:func1}
\end{subfigure}
\begin{subfigure}[b]{0.32\columnwidth}
\centering
\includegraphics[scale=0.30]{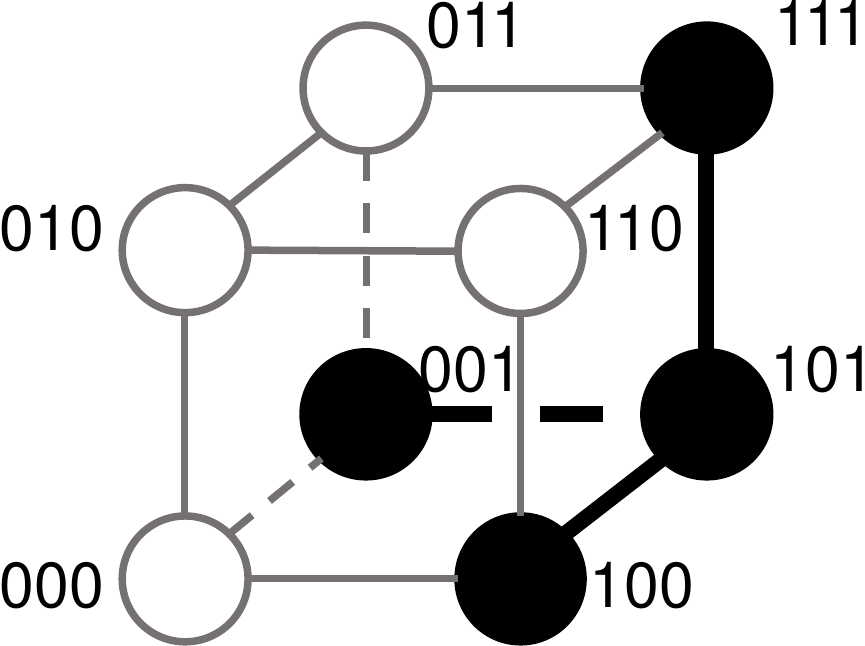}
\caption{$f_2$.}
\label{fig:func2}
\end{subfigure}
\begin{subfigure}[b]{0.32\columnwidth}
\centering
\includegraphics[scale=0.30]{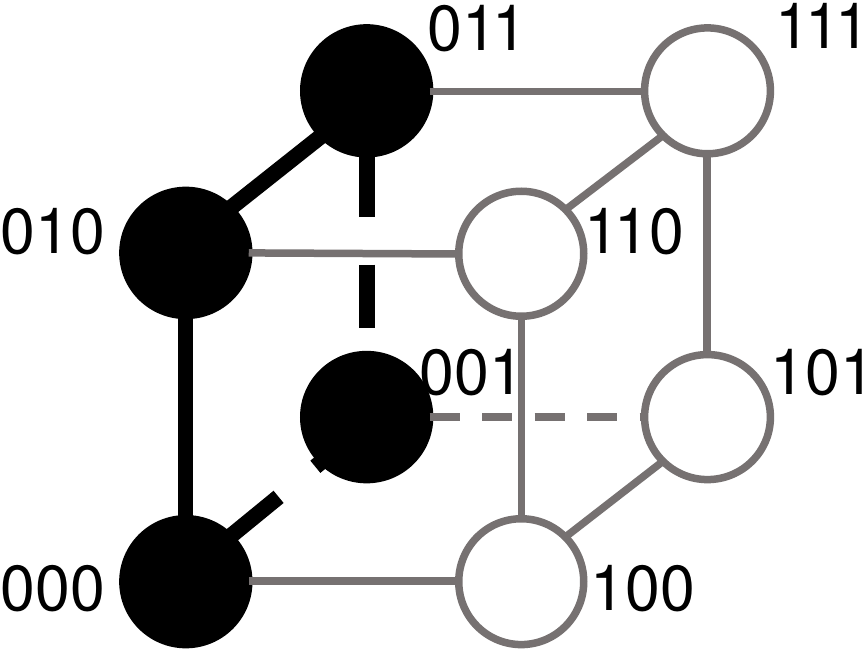}
\caption{$f_3$.}
\label{fig:func3}
\end{subfigure}
\caption{Hypercubes of three 3-variable Boolean functions. 3-majority logic $f_1$ and $f_2$ are NPN equivalent, and their induced subgraphs(bolded) are isomorphic. $f_2$ and $f_3$ are not NPN equivalent, and their induced subgraphs are non-isomorphic.}
\label{fig:npnequ}
\vspace{-1em}
\end{figure}


\subsection{Face Characteristic -- Cofactor}
\label{sec:cofactor}
The cofactor is derived from a Boolean function by substituting constant values for some input variables, and it has been well explored in Boolean matching and  classification~\cite{chai2006building,abdollahi2008symmetry,agosta2009transform,zhou2019fast,zhang2019efficient} in the past decades.

\begin{definition}
\emph{(cofactor)}.
The \emph{cofactor} of $f$ with respect to literal $x_i$ and $\overline{x_i}$ are denoted as $f_{x_i=1}$ and $f_{x_i=0}$, respectively.
\end{definition}

\begin{definition}
\emph{(cofactor signatures)}.
The \emph{cofactor signatures} are the satisfy count and the satisfy counts of the cofactors.
Particularly, the satisfy count of a Boolean function is called the \emph{0-ary cofactor signature}.
The \emph{1-ary cofactor signatures} are a set of satisfy counts of the cofactors with respect to each literal.
The \emph{higher-ary cofactor signatures}~(a.k.a higher-order cofactor signatures) are composed of the satisfy counts of the cofactors with respect to a set of variables.
\end{definition}

A face in the hypercube represents a cofactor with respect to one variable or multiple variables of a Boolean function.
In Fig.~\ref{fig:cofvisual}, the blue face is $f_{x_1=0}$.
Moreover, the blue face in Fig.~\ref{fig:cof2visual} represents the cofactor $f_{x_1x_2=00}$.
Therefore, cofactor signatures are the number of 1-minterms on the face in the hypercube.
They contain the \textbf{face characteristics} of Boolean functions.
Higher-ary cofactor signatures correspond to lower-ary faces.
Symmetry can be deduced by cofactor, which can also be seen as a property of faces in a Boolean function.

\begin{figure*}[htbp]
\setlength{\belowcaptionskip}{0pt}
\centering
\begin{subfigure}[b]{0.24\textwidth}
\centering
\includegraphics[scale=0.35]{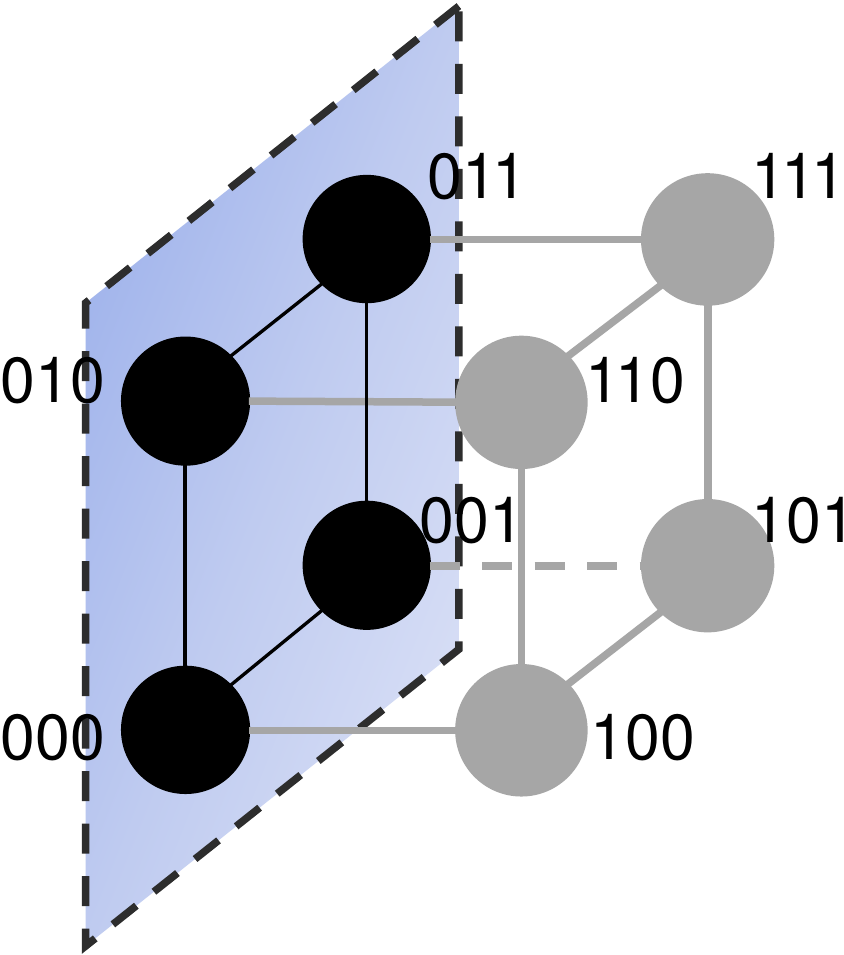}
\caption{cofactor.}
\label{fig:cofvisual}
\end{subfigure}
\begin{subfigure}[b]{0.24\textwidth}
\centering
\vspace{-0.5em}
\includegraphics[scale=0.35]{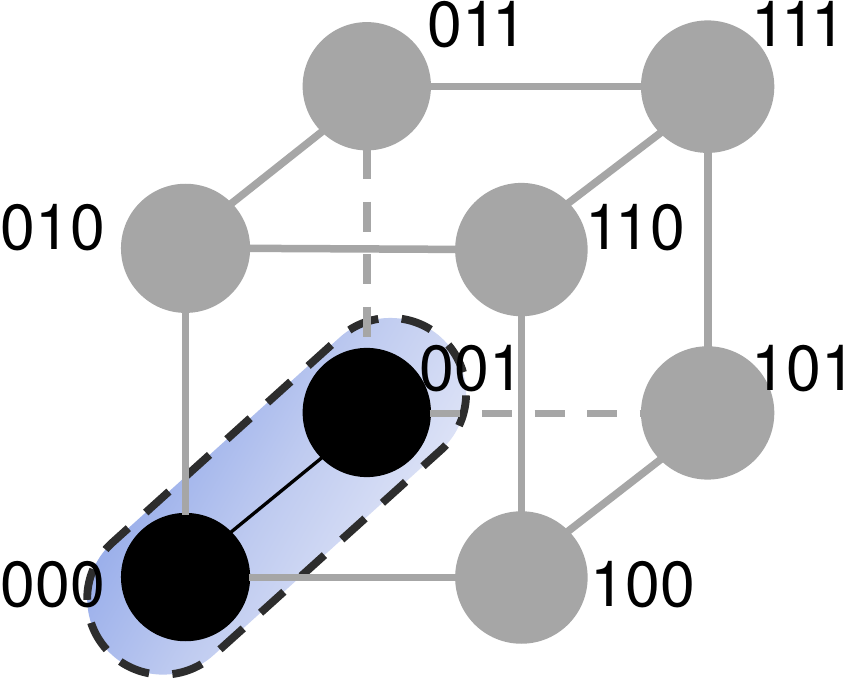}
\caption{2-ary cofactor.}
\label{fig:cof2visual}
\end{subfigure}
\begin{subfigure}[b]{0.24\textwidth}
\centering
\includegraphics[scale=0.35]{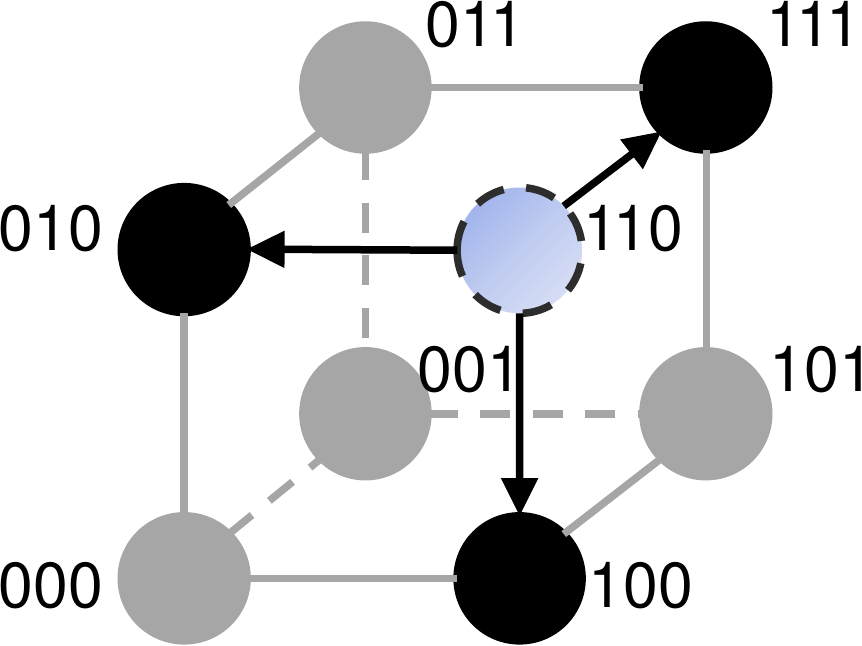}
\caption{sensitivity.}
\label{fig:senvisual}
\end{subfigure}
\begin{subfigure}[b]{0.24\textwidth}
\centering
\includegraphics[scale=0.35]{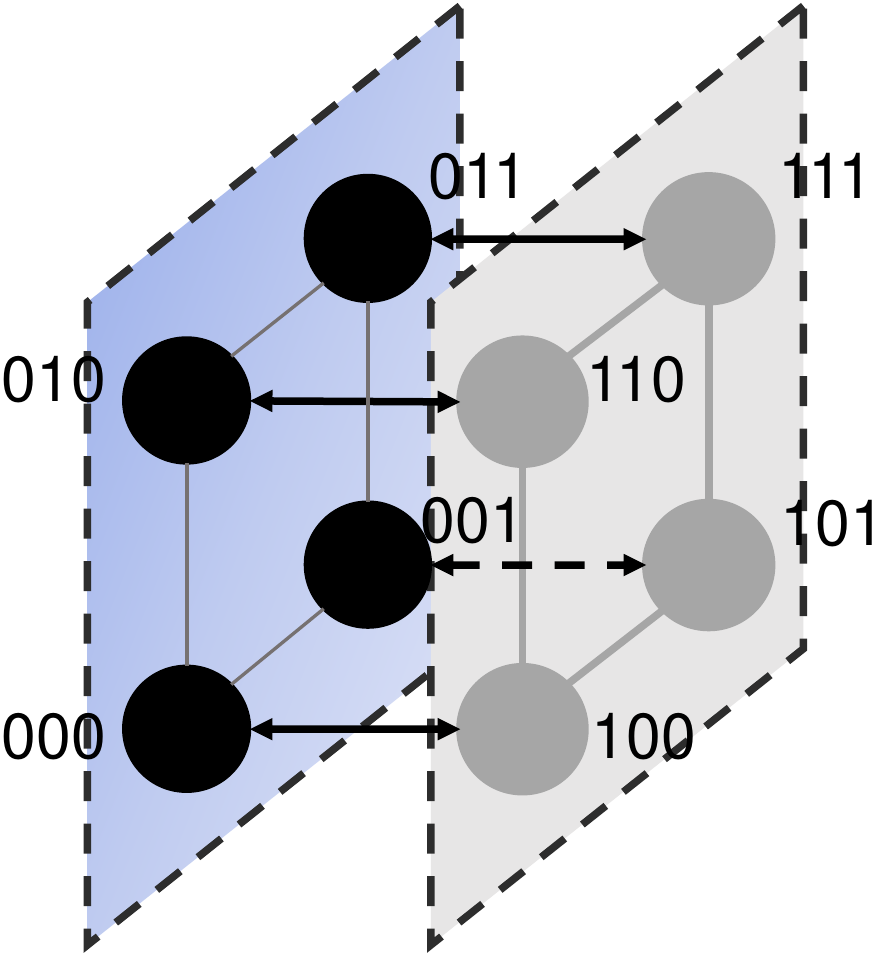}
\caption{influence.}
\label{fig:infvisual}
\end{subfigure}
\caption{Cofactor, Influence and Sensitivity Signatures Visualization.}
\label{fig:visual}
\vspace{-1em}
\end{figure*}

\subsection{Point Characteristic -- Sensitivity}

Sensitivity considers a hypercube point and counts how many adjacent points take a different value from this point.

\begin{definition}\label{def:sensitive}
(\emph{sensitive}).
Given a word $X$, Boolean function $f$ is \emph{sensitive} at literal $x_i$ for the word $X$, if the output flips when the $x_i$ flips~(i.e., $f(X) \neq f(X^i)$).
\end{definition}

Take the word $X$=\emph{100} for the 3-Majority logic $f_1$ in Fig.~\ref{fig:func1} as an example.
If the second bit flips,  $f_1$ will also flip.
Thus, $f_1$ is sensitive at $x_2$ for the word \emph{100}.

\begin{definition}\label{def:sensitivity}
(\emph{sensitivity}).
The \textit{sensitivity} of $f$ on the given word $X$, a.k.a.~\textit{local sensitivity}, is the number of input literals that are sensitive for $X$: $sen(f,X) = |{i:f(X) \neq f(X^{i})}|$.
Further we have the \emph{sensitivity} of $f$ as $sen(f) = \max\{sen(f,X): X \in \{0,1\}^n\}$.
The $\emph{0}$-\emph{sensitivity} of $f$ as $sen^0(f) = \max\{sen(f,X): X \in \{0,1\}^n, f(X) = 0\}$ and the $\emph{1}$-\emph{sensitivity} of $f$ as $sen^1(f) = \max\{sen(f,X): X \in \{0,1\}^n, f(X) = 1\}$.
\end{definition}

Sensitivity reflects the relation between neighboring points.
In Fig.~\ref{fig:senvisual}, the local sensitivity $sen(f, 110)$ indicates the property between the blue points and their neighboring points.
Therefore, sensitivity signatures contain the \textbf{point characteristics} of a Boolean function.

\subsection{Point-Face Characteristic -- Influence}

Influence is the probability that points in one face have a different value than the corresponding points in the opposite face.

\begin{definition}\label{def:inf}
(\emph{influence}).
The \emph{influence} of input $x_i$ on Boolean function $f$ is defined to be the probability that $f$ is sensitive at $x_i$ for a word $X$: $inf(f, i)=\mathop{Pr}\limits_{X \in \{0,1\}^n} [f(X) \neq f(X^i)] = \frac{1}{2^n} |f(X) \neq f(X^i): X\in \{0,1\}^n|$
\footnote{For convenience, we denote in the rest of this paper that  $inf(f, i)=\frac{1}{2}|f(X) \neq f(X^{i}): X\in \{0,1\}^n|$.
It is clear that $|f(X)\neq f(X^i) : X\in {0,1}^n|$ is an even integer. 
For example, if $f(000)\neq f(100)$, then $f(100)\neq f(000)$.   
Once the factor $\frac{1}{2^n}$ is removed, $inf(f,i)$ will be an integer. 
It is easier to compute integers than floating point numbers.}.
Furthermore, the \emph{total influence} of $f$ can be further defined as $inf(f)=\sum_{i=1}^n inf(f,i)$.
\end{definition}



Boolean influence derives from the sensitive definition.
The sensitive property captures the relation of neighboring points.
Influence indicates the sensitive properties between two opposite faces.
It calculates the number of minterms in a face with different values compared to the opposite face. 
In Fig.~\ref{fig:infvisual}, the influence of variable $x_1$ reflects on the blue and the grey face.
Therefore, influence signatures contain the \textbf{point-face characteristics} of a Boolean function.

From the above analysis, it is evident that the structural information of a Boolean function contained in the influence and sensitivity signatures is of a different perspective compared to the cofactor signatures.
Most previous works only considered the role of cofactor signatures in constructing the NPN classification method while ignoring the influence and sensitivity features.
These two characteristics have great potential to guide NPN classification.
We will explore them in the following two sections.

\section{Signature Vectors and NPN Equivalence}
\label{sec:prove}
In this section, we first design several signature vectors from the point and face characteristics of Boolean functions and then give some theorems and their proofs of these signature vectors and NPN equivalence.

\subsection{Signature Vectors}
We can further define several signature vectors from the definition of the face and point characteristics.

\begin{definition}
\emph{(ordered cofactor vector)}.
The \emph{\ 1-ary ordered cofactor vector} of an $n$-variable Boolean function $f$ is $\mathit{OC}\, V_1=\{ \left|f_{z=v}\right|: z\in\{x_1,x_2,...,x_n\}, v\in\{0,1\}\}_\le$, where $\{\cdot\}_\le$ is the sorted multi-set (of all cofactors' satisfy counts) in non-decreasing order and $|\mathit{OC}\, V_1|=2n$.
Furthermore, the \emph{$\ell$-ary ordered cofactor vector} of an $n$-variable Boolean function $f$ is the sorted multi-set $\mathit{OC}\,V_\ell=\{ \left|f_{z=v}\right|: z\in\{x_1,x_2,...,x_n\}_\ell$, $v\in\{0,1\}^\ell\}_\le$, where $Z_\ell=\{z \subseteq Z : |z|=\ell\}$ and $|\mathit{co}\,f_\ell| = {n \choose \ell} \cdot 2^\ell$.
\end{definition}

\begin{definition}
(\emph{ordered influence vector}).
The \emph{ordered influence vector} of Boolean function $f$ is $\mathit{OIV}(f)=\{ inf(f,z): z\in\{x_1,x_2,...,x_n\} \}_\le$.
\end{definition}

\begin{definition}\label{def:osv}
(\emph{ordered sensitivity vector}).
For all words $X$ in truth table $T(f)$, we denote the sorted multi-set $OSV(f)=\{sen(f,X): X \in\{0,1\}^n\}_\le$ as the \emph{ordered sensitivity vector} of function $f$.
Similarly, we can define $OSV^0(f)=\{sen(f,X): X \in\{0,1\}^n, f(X)=0\}_\le$ as the \emph{ordered $\mathit{0}$-sensitivity vector} and $OSV^1(f)=\{sen(f,X): X \in\{0,1\}^n, f(X)=1\}_\le$ as the \emph{ordered $\mathit{1}$-sensitivity vector}.
Obviously, we have $OSV(f)$ = $\{OSV^1(f), OSV^0(f)\}_\le$.
\end{definition}

\begin{definition} \label{d7}
(\emph{sensitivity distance}).
\emph{Hamming distance}
$h(X,Y)$ is a metric for comparing two binary strings $X$ and $Y$. 
It is the number of bit positions in which $X$ and $Y$ differ. 
The \emph{sensitivity distance} is defined as the Hamming distance of two words $X$ and $Y$ that have the same local sensitivity, denoted as a tuple $\{ \langle sen(f, X), sen(f, Y), h(X, Y)  \rangle | sen(f, X)$=$sen(f, Y) \}$.
\end{definition}

\begin{definition} \label{d8}
(\emph{ordered sensitivity distance vector}).
For a given $n$-variable Boolean function $f$, we define $OSDV(f)$ = $(\sigma_0, \sigma_1, \cdots, \sigma_n)$, where $\sigma_i$ = $( \delta_{i1}$, $\delta_{i2}$, $ \cdots$, $\delta_{in})$ and $\delta_{ij}$ = $|\{(X,Y) : sen(f,X)$=$sen(f,Y)$=$i$, $h(X,Y)$=$j$, and $ X < Y\}|$. $\delta_{ij}$ is the number of pairs $(X,Y)$ with sensitivity $i$ and distance $h(X,Y)$=$j$.  
Similarly, we can define $OSDV^1(f)$ and $OSDV^0(f)$ based on Definition~\ref{def:sensitivity}.

\end{definition}

\begin{table}[thp]
\caption{Examples of different signature vectors.}
\label{tab:vectorexample}
\centering
\small
\begin{tabular}{|c|c|c|}
\hline
Signatures & $f_1$ in Fig.~\ref{fig:func1} & $f_3$ in Fig.~\ref{fig:func3} \\
\hline
$OCV_1$ & (1,1,1,3,3,3) & (0,2,2,2,2,4) \\
$OCV_2$ & (0,0,0,1,1,1,1,1,1,2,2,2) & (0,0,0,0,1,1,1,1,2,2,2,2) \\
$OIV$   & (2,2,2)& (0,0,4)\\
$OSV^1$ & (0,2,2,2) & (1,1,1,1) \\
$OSV^0$ & (0,2,2,2) & (1,1,1,1) \\
$OSV$   & (0,0,2,2,2,2,2,2) & (1,1,1,1,1,1,1,1) \\
$OSDV^1$ & (0,0,0,0,0,0,0,3,0,0,0,0) & (0,0,0,4,2,0,0,0,0,0,0,0)\\
$OSDV$ & (0,0,1,0,0,0,6,6,3,0,0,0)& (0,0,0,12,12,4,0,0,0,0,0,0)\\
\hline
\end{tabular}
\end{table}

Table~\ref{tab:vectorexample} shows some examples of different signature vectors of two 3-input Boolean functions in Fig.~\ref{fig:npnequ}.
We will use this example to explain further how to get $OSDV^1$.
For $f_1$ in Fig.~\ref{fig:func1}, there is no word $X$ such that $sen^1(f, X)$=$1$ and $sen^1(f, X)$=$3$.
Moreover, only one word $X$=$111$ satisfies $sen^1(f, X)$=$0$.
Thus, $\sigma_0$=$\sigma_1$=$\sigma_3$=$(0,0,0)$.
For the three words that local sensitivity equal to 2, $011$, $101$, and $110$, we can obtain $\delta_{21}$=$0$, $\delta_{22}$=$3$ and $\delta_{23}$=$0$ according to Definition~\ref{d8}.
In summary, $OSDV^1(f_1)$=$(0,0,0,0,0,0,0,3,0,0,0,0)$.

\subsection{Signature Vectors and NPN Equivalence}
Previous work~\cite{abdollahi2008symmetry} has demonstrated that equality of $OCV_{\ell}$ is a prerequisite for NPN equivalence, so we only consider $OIV$, $OSV$, and $OSDV$ in this subsection.
The sensitive property of Boolean functions inherently considers the polarity of the output~(output negation).
For unbalanced Boolean functions, we only need to consider input phase assignment and input variables order.
That is to say, NPN equivalence is simplified to the PN equivalence problem.
Therefore, we only need to concentrate on PN equivalence to give the following theorems.

\begin{lemma}
\label{Lm1}
If Boolean function $f$ is PN-equivalent  to Boolean function $g$, that is  $f(\pi((\neg)x))=g(x)$,
then for any input $i$, we have $inf(f,\pi((\neg)i))=inf(g,i).$
\end{lemma}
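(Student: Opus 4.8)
The plan is to exploit that a PN transformation acts as a bijection of the Boolean cube which merely relabels and complements coordinates, while influence only counts coordinate-disagreements over the whole cube; such a relabeling can therefore do nothing but move which index carries which influence value. First I would fix notation for the transformation: write $\phi(x)=\pi((\neg)x)$, so the hypothesis reads $g(x)=f(\phi(x))$ for every $x\in\{0,1\}^n$. Since $(\neg)$ is a fixed component-wise complementation and $\pi$ a reordering of coordinates, $\phi$ is a bijection of $\{0,1\}^n$ onto itself.

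The key step is a commutation identity describing how $\phi$ interacts with a single-bit flip. Flipping the $i$-th bit commutes with the component-wise negation, because negation acts independently on each coordinate; hence $(\neg)(x^i)=((\neg)x)^i$. Applying the permutation then carries a flip in coordinate $i$ to a flip in coordinate $\pi(i)$. Writing $j=\pi((\neg)i)$ for the image index (the negation leaving the \emph{identity} of the index unchanged, only its polarity), this gives the identity $\phi(x^i)=(\phi(x))^{j}$.

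With this identity in hand, I would translate the disagreement condition. For every $x$,
\[
g(x)\neq g(x^i)\iff f(\phi(x))\neq f(\phi(x^i))\iff f(\phi(x))\neq f\bigl((\phi(x))^{j}\bigr).
\]
As $x$ ranges over $\{0,1\}^n$, the point $Y=\phi(x)$ ranges bijectively over $\{0,1\}^n$, so the set of $x$ with $g(x)\neq g(x^i)$ is in bijection with the set of $Y$ with $f(Y)\neq f(Y^{j})$. Counting these sets and inserting the factor $\tfrac12$ from the definition of influence yields $inf(g,i)=inf(f,j)=inf(f,\pi((\neg)i))$, which is exactly the claim.

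I expect the only genuine obstacle to be the coordinate bookkeeping in the commutation identity: pinning down the permutation convention so that a flip at index $i$ lands at index $\pi(i)$ (and not $\pi^{-1}(i)$), and observing that the negation $(\neg)$ alters the stored value but never the identity of the flipped coordinate, so that influence is genuinely polarity-blind. Once that identity is stated correctly, the remainder is a counting argument that follows immediately from $\phi$ being a bijection of the cube.
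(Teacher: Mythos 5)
Your proposal is correct and follows the same route as the paper's proof, which simply substitutes $g(X)=f(\pi((\neg)X))$ into the definition of influence and asserts the count is preserved; you have merely made explicit the bijection $\phi$ and the commutation identity $\phi(x^i)=(\phi(x))^{\pi(i)}$ that the paper leaves as ``it is clear.'' No gap — your version is a more careful elaboration of the same counting argument.
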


\begin{proof}
If  $f$ is PN-equivalent to $g$, 
it is clear that
$inf(g,i)=\frac{1}{2}|g(X) \neq g(X^{i}): X\in \{0,1\}^n|= \frac{1}{2}|f(\pi((\neg)X)\neq  f(\pi((\neg)X^{i})): X\in \{0,1\}^n)|\linebreak[1]= inf(f,\pi((\neg)i)).$
\end{proof}


\begin{theorem}\label{Tm1}
Two PN-equivalent functions $f$ and $g$ have the same ordered influence vector: if $f$ is PN-equivalent to $g$, then $OIV(f) = OIV(g)$.
\end{theorem}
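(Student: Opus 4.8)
The plan is to derive Theorem~\ref{Tm1} as an almost immediate corollary of Lemma~\ref{Lm1}. Recall that $OIV(f)$ is by definition the sorted multi-set $\{inf(f,z): z\in\{x_1,\dots,x_n\}\}_\le$, so two ordered influence vectors are equal precisely when the underlying multi-sets of per-variable influences coincide. Thus the whole content of the theorem is that PN-equivalence induces a bijection between the indices that preserves the influence values, and sorting then washes out any difference in ordering.

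First I would invoke the hypothesis that $f$ is PN-equivalent to $g$, i.e. $f(\pi((\neg)x))=g(x)$ for the fixed permutation $\pi$ and selective negation $(\neg)$. Lemma~\ref{Lm1} then gives, for every input index $i$, the equality $inf(g,i)=inf(f,\pi((\neg)i))$. The key observation is that the map $i\mapsto \pi((\neg)i)$ is a bijection on the index set $\{1,2,\dots,n\}$: permutation $\pi$ is a bijection by definition, and selective negation acts only on polarities and does not alter which variable index is addressed, so composing them still permutes the $n$ indices. Consequently, as $i$ ranges over all inputs, $inf(g,i)$ ranges over exactly the same collection of values as $inf(f,j)$ ranges over $j=\pi((\neg)i)$, counted with multiplicity.

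Next I would conclude at the level of multi-sets. Since $i\mapsto\pi((\neg)i)$ is a bijection, Lemma~\ref{Lm1} shows that the multi-set $\{inf(g,i):i\}$ is equal, as a multi-set, to $\{inf(f,\pi((\neg)i)):i\}=\{inf(f,j):j\}$. Two multi-sets that are equal have identical non-decreasing sortings, so $\{inf(g,i)\}_\le=\{inf(f,j)\}_\le$, which is exactly $OIV(g)=OIV(f)$.

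The argument is essentially bookkeeping once Lemma~\ref{Lm1} is in hand, so I do not anticipate a genuine obstacle; the only point demanding care is the bijectivity claim for $i\mapsto\pi((\neg)i)$, since one must be explicit that the selective negation $(\neg)$ contributes nothing to the index permutation and therefore the composite map is a genuine permutation of $\{1,\dots,n\}$. Making that observation precise is what guarantees the two multi-sets agree with multiplicity rather than merely as sets, which is the crux of moving from a per-index identity to an equality of sorted vectors.
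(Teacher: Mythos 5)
Your proposal is correct and follows essentially the same route as the paper: both derive the per-index identity $inf(g,i)=inf(f,\pi((\neg)i))$ from Lemma~\ref{Lm1}, note that negation does not affect influence while permutation merely reindexes, and conclude that the sorted multi-sets coincide. Your version is simply more explicit about the bijectivity of $i\mapsto\pi((\neg)i)$ and the multi-set bookkeeping, which the paper leaves implicit.
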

\begin{proof}
According to Definition \ref{def:inf} and Lemma~\ref{Lm1}, it is clear that negation of a variable  will not change its influence and
$inf(f, \pi(i))\linebreak[0]=inf(g,i) $ under permutation of variable $x_i$.
Therefore, the ordered $OIV$ will not change if two functions $f$ and $g$ are PN-equivalent to each other.    
\end{proof}


\begin{lemma}
\label{Lm2} 
If Boolean function $f$ is  PN-equivalent  to Boolean function $g$, that is  $f(\pi((\neg)x)=g(x)$,
then for any input $X$,   we have $sen(f,\pi((\neg)X))=sen(g,X).$ 
\end{lemma}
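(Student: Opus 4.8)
The plan is to show that local sensitivity is preserved under the relabeling and complementation induced by a PN transformation, in direct analogy with the proof of Lemma~\ref{Lm1}. The guiding observation is that a PN transformation acts on the hypercube $\{0,1\}^n$ as a bijection that maps single-variable flips to single-variable flips: negating variables is a fixed coordinatewise complementation and permuting variables merely reindexes coordinates, and neither operation changes the fact that two words are neighbors (differ in exactly one coordinate). Since $sen(g,X)$ counts the neighbors of $X$ on which $g$ takes a value different from $g(X)$, this neighbor-counting quantity for $g$ at $X$ should equal the corresponding quantity for $f$ at the image $\pi((\neg)X)$.

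Concretely, I would write $\phi(X)=\pi((\neg)X)$, so that the PN-equivalence $f(\pi((\neg)x))=g(x)$ reads $g(X)=f(\phi(X))$ for every word $X$. The key step is to establish the commutation identity $\phi(X^{i})=(\phi(X))^{\sigma(i)}$, where $\sigma$ is the permutation of indices induced by $\pi$: flipping the $i$-th bit of $X$ and then applying the selective negation and permutation has the same effect as first forming $\phi(X)$ and then flipping its $\sigma(i)$-th bit. Negation contributes only a fixed offset that commutes with the single-bit flip, and permutation sends the flipped coordinate $i$ to the coordinate $\sigma(i)$; in particular $\sigma$ is a bijection on $\{1,\dots,n\}$.

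Substituting this identity into the definition of sensitivity gives
\[
sen(g,X)=|\{i: f(\phi(X))\neq f((\phi(X))^{\sigma(i)})\}|.
\]
Because $\sigma$ is a bijection, as $i$ ranges over all indices so does $\sigma(i)$, and the count is unaffected by reindexing; hence it equals $|\{j: f(\phi(X))\neq f((\phi(X))^{j})\}|=sen(f,\phi(X))=sen(f,\pi((\neg)X))$, which is the claimed equality.

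The main obstacle is the careful verification of the commutation identity $\phi(X^{i})=(\phi(X))^{\sigma(i)}$. One must track how the single-coordinate flip interacts separately with the complementation and with the reindexing, and confirm that their composition still flips exactly one coordinate and that the resulting index map $\sigma$ is precisely the bijection determined by $\pi$. Once this identity is in hand, the bijectivity of $\sigma$ reduces the rest to a routine reindexing of the counting set, exactly as the change of variables did in Lemma~\ref{Lm1}.
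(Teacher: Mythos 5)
Your proof is correct and takes essentially the same approach as the paper's: both rest on the observation that the PN transformation carries the flip of coordinate $i$ to the flip of coordinate $\sigma(i)$ for a bijection $\sigma$ induced by $\pi$, so the neighbor-count defining local sensitivity is preserved. If anything, your explicit commutation identity $\phi(X^{i})=(\phi(X))^{\sigma(i)}$ makes the argument more precise than the paper's version, which simply asserts the corresponding fact and illustrates it with a single 4-bit example.
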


\begin{proof}
Since $f(\pi((\neg)x_1,(\neg)x_2,\cdots,(\neg)x_n))=g(x_1,x_2,\\ \cdots, x_n)$, it is clear that 
if $f$ is sensitive at index $i$ for input word $\pi((\neg)X)$,
then $g$ is sensitive at index $j$ for input word $X$ where $\pi(j)$=$i$.
The sensitive property inherently considers negations of the variables so that flipping an input can not change anything of a Boolean function's sensitivity.

For example, let  $f(x)$ be a 4-bit Boolean function, permutation $\pi(x_1x_2x_3x_4)$ = $x_4x_3x_2x_1$, selective negation $(\neg)x_1x_2x_3x_4$ = $\overline{x_1}x_2\overline{x_3}x_4$, and $f(\pi((\neg)x_1x_2x_3x_4))$ = $f(x_4,\overline{x_3},x_2,\overline{x_1})$ = $g(x_1,x_2,x_3,x_4)$.
Assume that $f$ is sensitive at index $2$ for word $(x_4,\overline{x_3},x_2,\overline{x_1})$  (i.e., at where $\overline{x_3}$ locates), we have $g(x_1,x_2,x_3,x_4)$ = $f(x_4,\overline{x_3},x_2,\overline{x_1})$ = $\neg f(x_4,x_3,x_2,\overline{x_1})$ = $\neg g(x_1,x_2,\overline{x_3},x_4)$.
Function $g$ is sensitive at index $3$=$\pi(2)$ for input word $(x_1,x_2,x_3,x_4)$.

Therefore, for any $X$, $sen(f,\pi((\neg)X))=sen(g,X).$  
\end{proof}

\begin{figure}[tbp]
\centering
\begin{subfigure}[b]{0.49\columnwidth}
\centering
\includegraphics[scale=0.25]{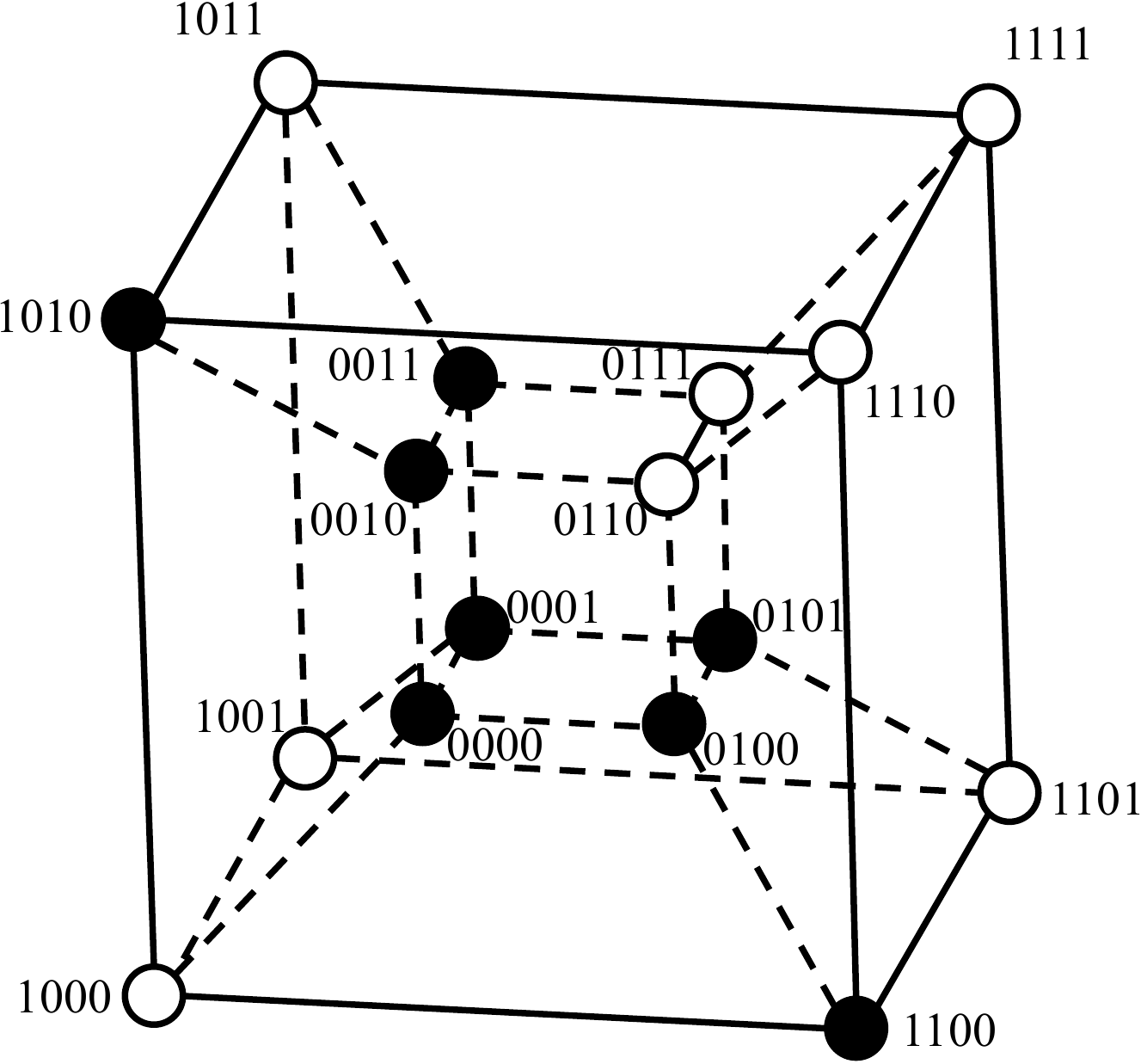}
\caption{\tiny{$OSV^1(f)=\{1,1,1,1,2,2,3,3\}$\\$OSV^0(f)=\{0,1,2,2,2,2,2,3\}$.}}
\label{fig:4balanced1}
\end{subfigure}
\begin{subfigure}[b]{0.49\columnwidth}
\centering
\includegraphics[scale=0.25]{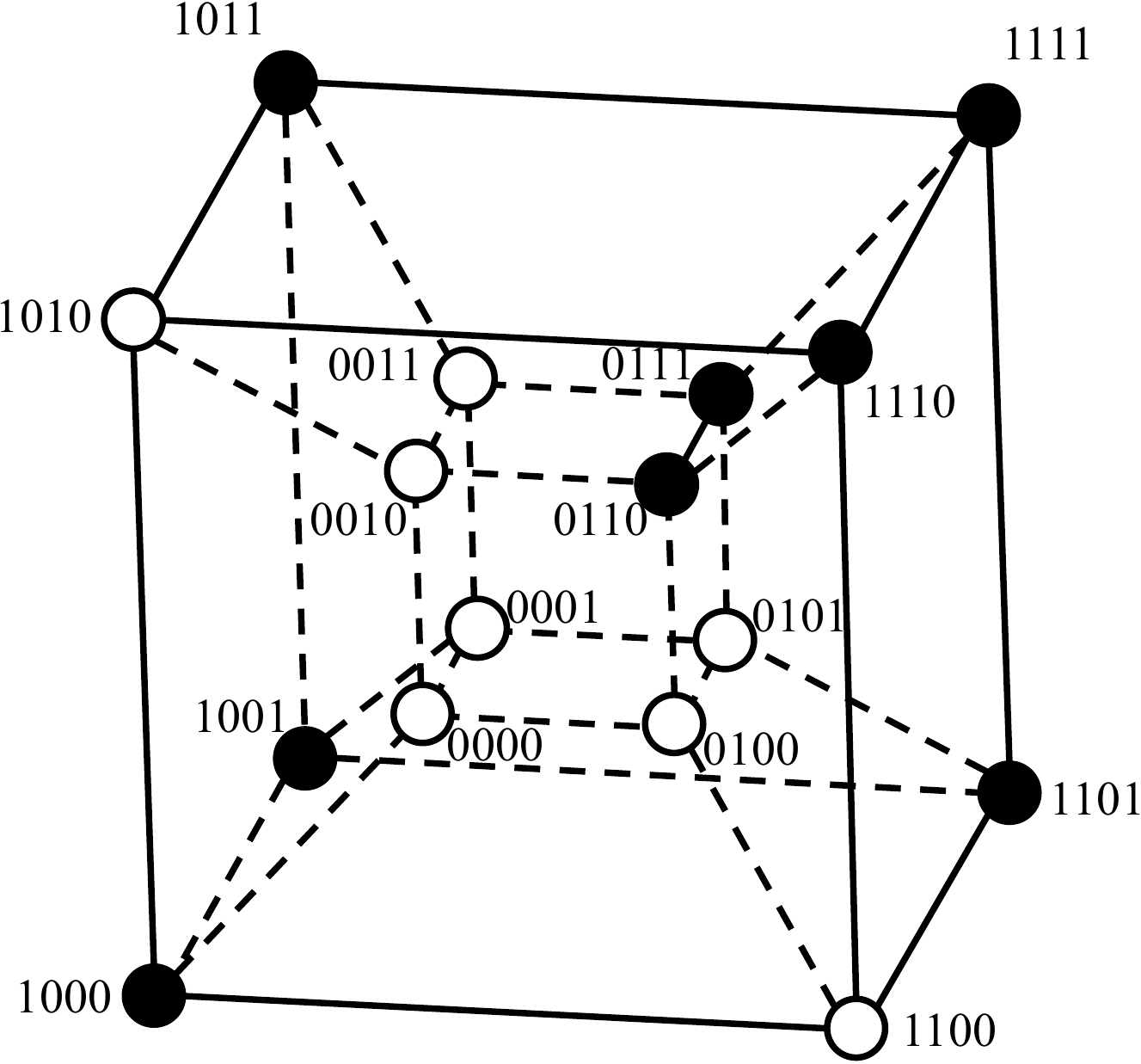}
\caption{\tiny{$OSV^1(g)=\{0,1,2,2,2,2,2,3\}$\\$OSV^0(g)=\{1,1,1,1,2,2,3,3\}$.}}
\label{fig:4balanced2}
\end{subfigure}
\caption{Two NPN equivalent balanced Boolean functions. $OSV^1(f)=OSV^0(g)$ and $OSV^0(f)=OSV^1(g)$ in these two functions.}
\label{fig:4balanced}
\vspace{-1em}
\end{figure}

\begin{theorem}\label{Tm2}
Two PN-equivalent unbalanced functions $f$ and $g$ have the same ordered sensitivity vector, ordered 0-sensitivity vector, and ordered 1-sensitivity vector: if $f$ is PN-equivalent to $g$, then $(OSV, OSV^0, OSV^1)(f) = (OSV, OSV^0, OSV^1)(g)$.
\end{theorem}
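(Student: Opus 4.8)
The plan is to reduce everything to Lemma~\ref{Lm2} together with the observation that the PN-transformation acts as a bijection on the set of words. Writing $\phi(X) = \pi((\neg)X)$, the map $\phi: \{0,1\}^n \to \{0,1\}^n$ is the composition of a coordinate permutation and a fixed selective bit-flip, each of which is invertible, so $\phi$ is a bijection. Lemma~\ref{Lm2} then supplies the crucial pointwise identity $sen(f, \phi(X)) = sen(g, X)$ for every $X$; that is, $\phi$ carries each word of $g$ to a word of $f$ with exactly the same local sensitivity.

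For the $OSV$ claim I would argue that, as $X$ ranges over all of $\{0,1\}^n$, so does $Y = \phi(X)$, so the sorted multi-set $\{sen(g,X) : X \in \{0,1\}^n\}_\le$ equals $\{sen(f, \phi(X)) : X \in \{0,1\}^n\}_\le = \{sen(f,Y) : Y \in \{0,1\}^n\}_\le$, which is precisely the statement $OSV(f) = OSV(g)$. For the refined vectors $OSV^0$ and $OSV^1$ the key extra ingredient is that PN-equivalence preserves the output value without negation: since $f(\phi(X)) = g(X)$, we have $g(X) = 1$ if and only if $f(\phi(X)) = 1$. Hence $\phi$ restricts to a bijection from $\{X : g(X) = 1\}$ onto $\{Y : f(Y) = 1\}$ (and likewise between the two $0$-sets), and combining this with the sensitivity-preservation of Lemma~\ref{Lm2} yields $OSV^1(f) = OSV^1(g)$ and $OSV^0(f) = OSV^0(g)$ by the same multi-set argument restricted to each output class.

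The step that deserves the most care---and the reason the theorem is stated in the PN rather than the full NPN setting---is the separation of $OSV^0$ from $OSV^1$. This separation hinges entirely on $\phi$ preserving the output bit exactly; if an output negation were allowed, $\phi$ would instead exchange the $0$-set and the $1$-set, so that $OSV^1(f)$ would match $OSV^0(g)$ rather than $OSV^1(g)$. This is exactly the behaviour illustrated in Fig.~\ref{fig:4balanced} for balanced functions under full NPN equivalence, and it clarifies the role of the unbalanced hypothesis: for unbalanced functions NPN equivalence collapses to PN equivalence, so the three vectors become genuine NPN invariants. Once the PN hypothesis is fixed, however, no real obstacle remains: the argument is a direct bijection-plus-Lemma~\ref{Lm2} computation, and the combined equality $(OSV, OSV^0, OSV^1)(f) = (OSV, OSV^0, OSV^1)(g)$ follows immediately.
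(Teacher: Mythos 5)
Your proof is correct and follows the route the paper intends: the paper gives no explicit proof of Theorem~\ref{Tm2} (deferring to Zhang \emph{et al.} and to Lemma~\ref{Lm2}), and your argument---that $\phi(X)=\pi((\neg)X)$ is a bijection on $\{0,1\}^n$ preserving local sensitivity by Lemma~\ref{Lm2} and preserving the output bit since $f(\phi(X))=g(X)$, hence restricting to bijections between the $0$-sets and between the $1$-sets---is exactly the missing computation. Your closing remark correctly identifies why the unbalanced hypothesis appears in the statement (it is what lets NPN reduce to PN, and is not needed for the PN claim itself), which matches the paper's surrounding discussion and Theorem~\ref{th4.3}.
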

A similar theorem has been proved by Zhang~{\it et al.}~\cite{zhang2021enhanced}.
However, they ignored balanced Boolean functions.
The two Boolean functions in Fig.~\ref{fig:4balanced} are NPN equivalent.
For these two functions, $OSV^1(f)$=$OSV^0(g)$ and $OSV^0(f)$=$OSV^1(g)$.
We split 1-sensitivity and 0-sensitivity to handle balanced Boolean functions. Given two NPN-equivalent unbalanced Boolean functions $f$ and $g$, it is easy to check whether $g$ is transformed from $f$ by negation. 
We can use the 0-ary cofactor of the functions to find out the potential negation. 
However, if $f$ and $g$ are balanced, it cannot find out the potential negation just by using their 0-ary cofactor.  
In such cases, we calculate both 1-sensitivity and 0-sensitivity of the functions to deal with the potential negation.   

\begin{theorem}
\label{th4.3}
Given two balanced Boolean functions $f$ and $g$, if $f$ is NPN-equivalent to  $g$, then $OSV^1(f)$=$OSV^1(g)$, $OSV^0(f)$=$OSV^0(g)$ or $OSV^1(f)$=$OSV^0(g)$, $OSV^0(f)\\$=$OSV^1(g)$.
\end{theorem}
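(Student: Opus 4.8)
The plan is to reduce the NPN case to the PN case already covered by Lemma~\ref{Lm2} (and its unbalanced corollary, Theorem~\ref{Tm2}), and then to track precisely what an output negation does to the $1$- and $0$-sensitivity vectors. Since $f$ and $g$ are NPN-equivalent, by definition there exist a permutation $\pi$ and a selective negation $(\neg)$ such that either $g(x)=f(\pi((\neg)x))$ (no output negation) or $g(x)=\overline{f(\pi((\neg)x))}$ (with output negation). These two possibilities correspond exactly to the two alternatives in the conclusion, so I would split the argument into these two cases. The reason a disjunction is unavoidable here—unlike the unbalanced situation—is that for balanced $f$ and $g$ the $0$-ary cofactor cannot reveal which of the two cases occurs, so the statement must allow both.

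The key observation I would establish first is that output negation leaves the local sensitivity of every word unchanged. Indeed, for any $X$ and any index $i$, $\overline{f}(X)\neq\overline{f}(X^i)$ holds if and only if $f(X)\neq f(X^i)$, so $sen(\overline{f},X)=sen(f,X)$ for all $X$. Because complementing the output swaps the set of $1$-points with the set of $0$-points while preserving each local-sensitivity value, this immediately yields the swap identities $OSV^1(\overline{f})=OSV^0(f)$ and $OSV^0(\overline{f})=OSV^1(f)$, which are the engine that turns the second case into the second alternative.

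In the first case, where $f$ is PN-equivalent to $g$, Lemma~\ref{Lm2} gives $sen(f,\pi((\neg)X))=sen(g,X)$ for every $X$, and since $X\mapsto\pi((\neg)X)$ is a bijection of $\{0,1\}^n$ that also maps the $1$-points of $g$ exactly onto the $1$-points of $f$, the multisets agree: $OSV^1(f)=OSV^1(g)$ and $OSV^0(f)=OSV^0(g)$, which is the first alternative. In the second case, where $g$ is PN-equivalent to $\overline{f}$, applying the same argument to $\overline{f}$ and $g$ gives $OSV^1(\overline{f})=OSV^1(g)$ and $OSV^0(\overline{f})=OSV^0(g)$; combining these with the swap identities from the previous paragraph yields $OSV^0(f)=OSV^1(g)$ and $OSV^1(f)=OSV^0(g)$, which is the second alternative.

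The routine part is the bookkeeping of the bijection $X\mapsto\pi((\neg)X)$ and confirming that it respects the partition of $\{0,1\}^n$ into $1$-points and $0$-points; this is exactly what lets the equalities of unrestricted multisets refine to equalities of the $OSV^1$ and $OSV^0$ pieces. The only genuinely new ingredient beyond Theorem~\ref{Tm2} is the invariance of local sensitivity under output negation together with the resulting exchange of $OSV^1$ and $OSV^0$. I expect the main subtlety to be purely notational—keeping straight over which function's $1$-points the multiset is indexed once both a PN transformation and a possible output complement are simultaneously in play.
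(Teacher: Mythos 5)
Your proof is correct and follows essentially the same route as the paper: split on whether the NPN transformation includes an output negation, invoke Theorem~\ref{Tm2} (via Lemma~\ref{Lm2}) for the PN case, and handle the negated case by the swap of $OSV^1$ and $OSV^0$. You merely spell out the swap step (invariance of local sensitivity under output complementation plus the exchange of $1$-points and $0$-points) that the paper dismisses with ``it is clear that.''
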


\begin{proof}
According to Theorem \ref{Tm2}, if $f$ PN-equivalent to $g$, that is $f(\pi((\neg)x))=g(x)$, we have $OSV^1(f)=OSV^1(g)$ and  $OSV^0(f)=OSV^0(g)$.  
If $\neg f(\pi((\neg)x))=g(x)$, it is clear that  $OSV^0(f)=OSV^1(g)$ and  $OSV^1(f)=OSV^0(g)$.
\end{proof}
In order to deal with balanced Boolean functions in our algorithms,   if $OSV^1(f)$ is smaller than $OSV^0(f)$, we will exchange these two vectors and always put the smaller one in $OSV^0(f)$.

\begin{lemma}
\label{Lm3}
Given two inputs $X$ and $Y$, if $f$ is PN-equivalent to $g$, then $\langle sen(f,\pi((\neg)X))$, $sen(f,\pi((\neg)Y))$,   $h(\pi((\neg)X)$,  $\pi((\neg)Y))  \rangle$=$\langle sen(g,X)$, $sen(g,Y)$, $h(X,Y)  \rangle$.
\end{lemma}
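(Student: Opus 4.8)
The plan is to dispatch the claim componentwise: the tuple has three entries, and I would argue each one separately. The first two entries are sensitivities and follow immediately from Lemma~\ref{Lm2}. Applying that lemma to the word $X$ gives $sen(f,\pi((\neg)X))=sen(g,X)$, and applying it a second time to $Y$ gives $sen(f,\pi((\neg)Y))=sen(g,Y)$. Hence the only genuinely new content is the third entry, namely that the PN transformation preserves Hamming distance: $h(\pi((\neg)X),\pi((\neg)Y))=h(X,Y)$.

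To handle the Hamming distance, I would factor the PN transformation into its negation part and its permutation part and show each preserves $h$ independently. For the selective negation, I would fix the negation pattern as a constant word $c\in\{0,1\}^n$ so that $(\neg)X=X\oplus c$ and $(\neg)Y=Y\oplus c$. Then for every index $i$, the bits $X_i$ and $Y_i$ disagree if and only if $X_i\oplus c_i$ and $Y_i\oplus c_i$ disagree, since XOR-ing both sides with the same bit cannot change whether they are equal; therefore $h((\neg)X,(\neg)Y)=h(X,Y)$. For the permutation $\pi$, since it merely relabels coordinate positions by a bijection, the set of positions at which two words differ maps bijectively to the set of positions at which the permuted words differ, so its cardinality is unchanged and $h(\pi(Z),\pi(W))=h(Z,W)$ for all $Z,W$. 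Composing the two facts yields $h(\pi((\neg)X),\pi((\neg)Y))=h((\neg)X,(\neg)Y)=h(X,Y)$, which is exactly the last equality needed.

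With all three coordinates matched, the two tuples coincide and the lemma follows. The main (and essentially only) obstacle is the Hamming-distance claim, and even there the crucial observation is simply that a fixed XOR offset and a coordinate permutation are both distance-preserving maps on the hypercube; the two sensitivity coordinates come for free once Lemma~\ref{Lm2} is in hand.
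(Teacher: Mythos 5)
Your proposal is correct and follows essentially the same route as the paper's proof: the two sensitivity entries are handled by Lemma~\ref{Lm2}, and the third entry reduces to the observation that a PN transformation preserves Hamming distance. The paper dismisses that last fact as ``easy to see,'' whereas you helpfully spell it out by factoring the transformation into a fixed XOR offset and a coordinate permutation, each of which is an isometry of the hypercube.
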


\begin{proof}
It is easy to see that $h(X,Y)=h(\pi((\neg)X,(\neg)Y))$. 
According to Lemma \ref{Lm2} and Definition~\ref{d7}, the theorem holds.
\end{proof}

\begin{theorem}
\label{Tm3}
Two PN-equivalent unbalanced functions $f$ and $g$ have the same ordered sensitivity distance vector, ordered 0-sensitivity distance vector, and ordered 1-sensitivity distance vector: if $f$ is PN-equivalent to $g$, then $(OSDV, OSDV^0, OSDV^1)(f) = (OSDV, OSDV^0, OSDV^1)(g)$.
For two balanced Boolean functions $f$ and $g$, if $f$ is NPN-equivalent to  $g$, then $OSDV^1(f)=OSDV^1(g)$, $OSDV^0(f)=OSDV^0(g)$ or $OSDV^1(f)=OSDV^0(g)$, $OSDV^0(f)=OSDV^1(g)$.
\end{theorem}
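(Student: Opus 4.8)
The plan is to exploit that a PN transformation acts as a structure-preserving bijection of the hypercube vertices. Write $\phi(X)=\pi((\neg)X)$; the hypothesis $f(\pi((\neg)x))=g(x)$ reads $g(X)=f(\phi(X))$, and $\phi$ is a bijection on $\{0,1\}^n$ since permutation and selective negation are each invertible. Lemma~\ref{Lm3} already packages exactly what I need for pairs: for every $X,Y$ the triple $\langle sen(f,\phi(X)),sen(f,\phi(Y)),h(\phi(X),\phi(Y))\rangle$ equals $\langle sen(g,X),sen(g,Y),h(X,Y)\rangle$. In other words $\phi$ preserves the local sensitivity of each endpoint and the Hamming distance between endpoints simultaneously.

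First I would treat the unbalanced, PN-equivalent claim. The entry $\delta_{ij}(g)$ counts pairs $(X,Y)$ with $sen(g,X)=sen(g,Y)=i$, $h(X,Y)=j$, and $X<Y$. Since $j\ge 1$ forces $X\ne Y$, the condition $X<Y$ only selects one representative from each unordered pair, so $\delta_{ij}(g)$ equals the number of unordered pairs $\{X,Y\}$ satisfying the sensitivity and distance constraints. Sending $\{X,Y\}$ to $\{\phi(X),\phi(Y)\}$ and invoking Lemma~\ref{Lm3}, the image satisfies the same constraints with respect to $f$; as $\phi$ is a bijection on vertices it induces a bijection on unordered pairs, so qualifying pairs for $g$ correspond one-to-one with qualifying pairs for $f$. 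Hence $\delta_{ij}(f)=\delta_{ij}(g)$ for all $i,j$, i.e.\ $OSDV(f)=OSDV(g)$. For the polarity-restricted vectors I would add the observation that $g(X)=f(\phi(X))$ makes $\phi$ carry the $1$-set of $g$ onto the $1$-set of $f$ and the $0$-set onto the $0$-set, so restricting the pair count to $1$-words (resp.\ $0$-words) is respected, yielding $OSDV^1(f)=OSDV^1(g)$ and $OSDV^0(f)=OSDV^0(g)$.

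For the balanced, NPN-equivalent claim I would split on the output negation as in Theorem~\ref{th4.3}. When $g(x)=f(\pi((\neg)x))$ the previous paragraph gives $OSDV^1(f)=OSDV^1(g)$ and $OSDV^0(f)=OSDV^0(g)$. When $g(x)=\neg f(\pi((\neg)x))$, I would first note that a global output negation leaves every local sensitivity unchanged, because $g(X)\ne g(X^i)\iff f(\phi(X))\ne f(\phi(X^i))$; hence $\phi$ is still sensitivity- and distance-preserving and $OSDV(f)=OSDV(g)$ persists. The only difference is that now $g(X)=1\iff f(\phi(X))=0$, so $\phi$ sends the $1$-set of $g$ to the $0$-set of $f$. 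Restricting the pair count to $1$-words of $g$ therefore matches the count over $0$-words of $f$, giving $OSDV^1(g)=OSDV^0(f)$ and, symmetrically, $OSDV^0(g)=OSDV^1(f)$. These two possibilities are exactly the disjunction in the statement.

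The step I expect to be the main obstacle is the bookkeeping around the ordering constraint $X<Y$: $\phi$ need not respect the numerical order of words, so one cannot match ordered pairs directly. The clean remedy is to count unordered pairs, which is valid precisely because $h(X,Y)=j\ge 1$ guarantees distinct endpoints; after that every step is a transport of the sensitivity, distance, and output data through $\phi$. I would also take care, in the output-negation subcase, to check explicitly that local sensitivity is invariant under a global output flip, since that is the one place the argument departs from the pure PN case.
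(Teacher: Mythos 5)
Your proposal is correct and follows essentially the same route as the paper's proof: invoke Lemma~\ref{Lm3} to see that the PN transformation preserves the sensitivity--distance triples, conclude equality of the $\delta_{ij}$ counts from the definitions, and handle the balanced case by splitting on output negation exactly as in Theorem~\ref{th4.3}. You merely make explicit two points the paper leaves implicit --- that the $X<Y$ convention just selects unordered pairs so the (order-non-preserving) bijection $\phi$ still matches the counts, and that a global output flip leaves local sensitivity unchanged --- both of which are worthwhile clarifications but not a different argument.
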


\begin{proof}

Given two inputs $x$ and $y$, if $f$ is PN-equivalent to $g$, according to  Lemma \ref{Lm3},   we have that $\langle h(X,Y)$, $sen(f,X)$, $sen(f,Y)\rangle$=$\langle h(X',Y')$, $sen(g,X')$, $sen(g,Y')\rangle$, where $X'$ and $Y'$ is the NP transformation of $X$ and $Y$. 
According to Definition \ref{d7} and  Definition \ref{d8}, it is clear that for unbalanced Boolean functions $f$ and $g$, we have $(OSDV,OSDV^0,OSDV^1)(f)$ = $(OSDV,OSDV^0,OSDV^1)(g)$. 
Similar to the proof of Theorem \ref{th4.3}, we can prove the results of the rest of the theorem for balanced Boolean functions.  
\end{proof}

\begin{figure*}[htbp]
\centering
\begin{subfigure}[b]{0.49\columnwidth}
\centering
\includegraphics[scale=0.25]{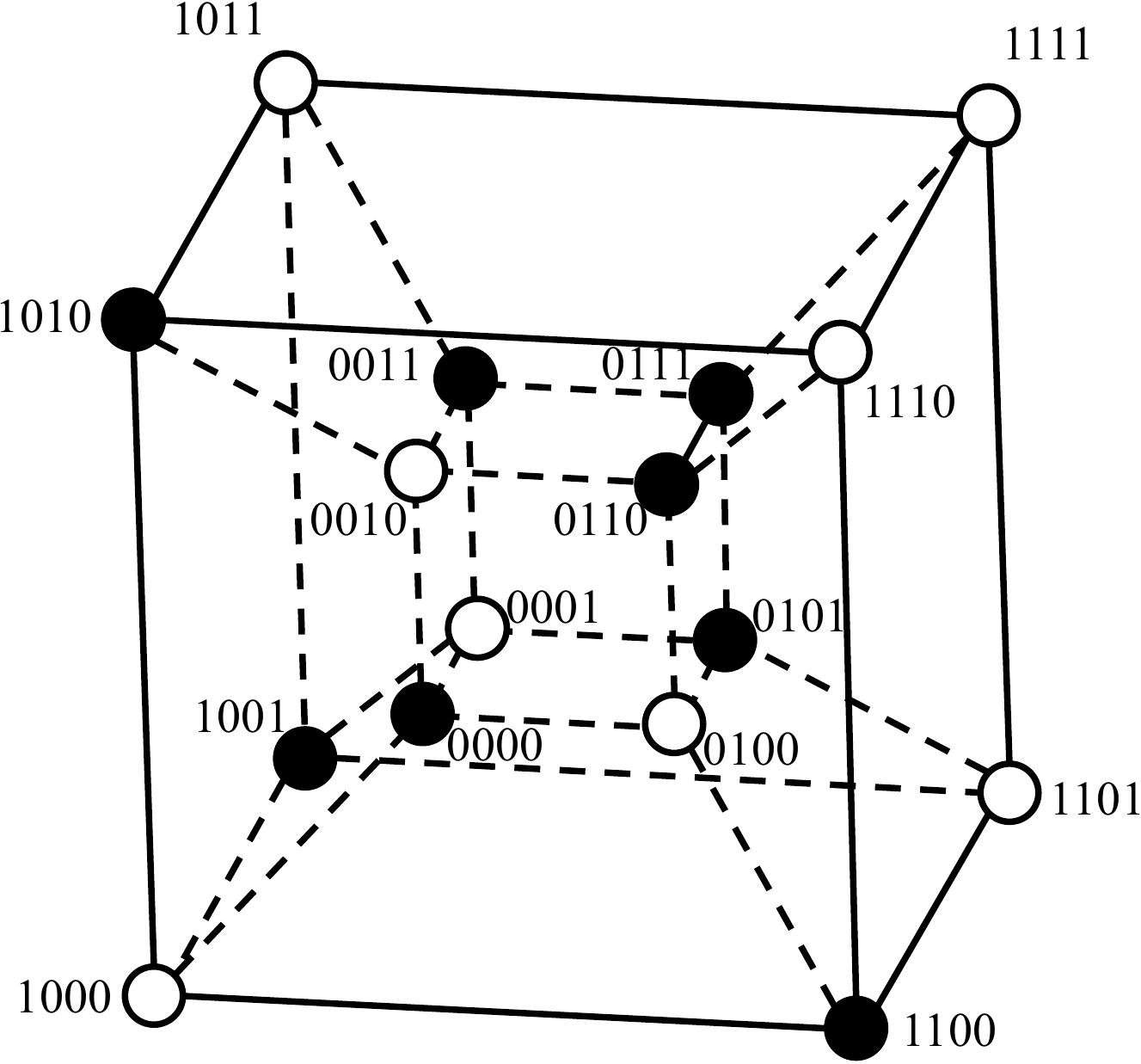}
\caption{$g_1$.}
\label{fig:4dimf}
\end{subfigure}
\begin{subfigure}[b]{0.49\columnwidth}
\centering
\includegraphics[scale=0.25]{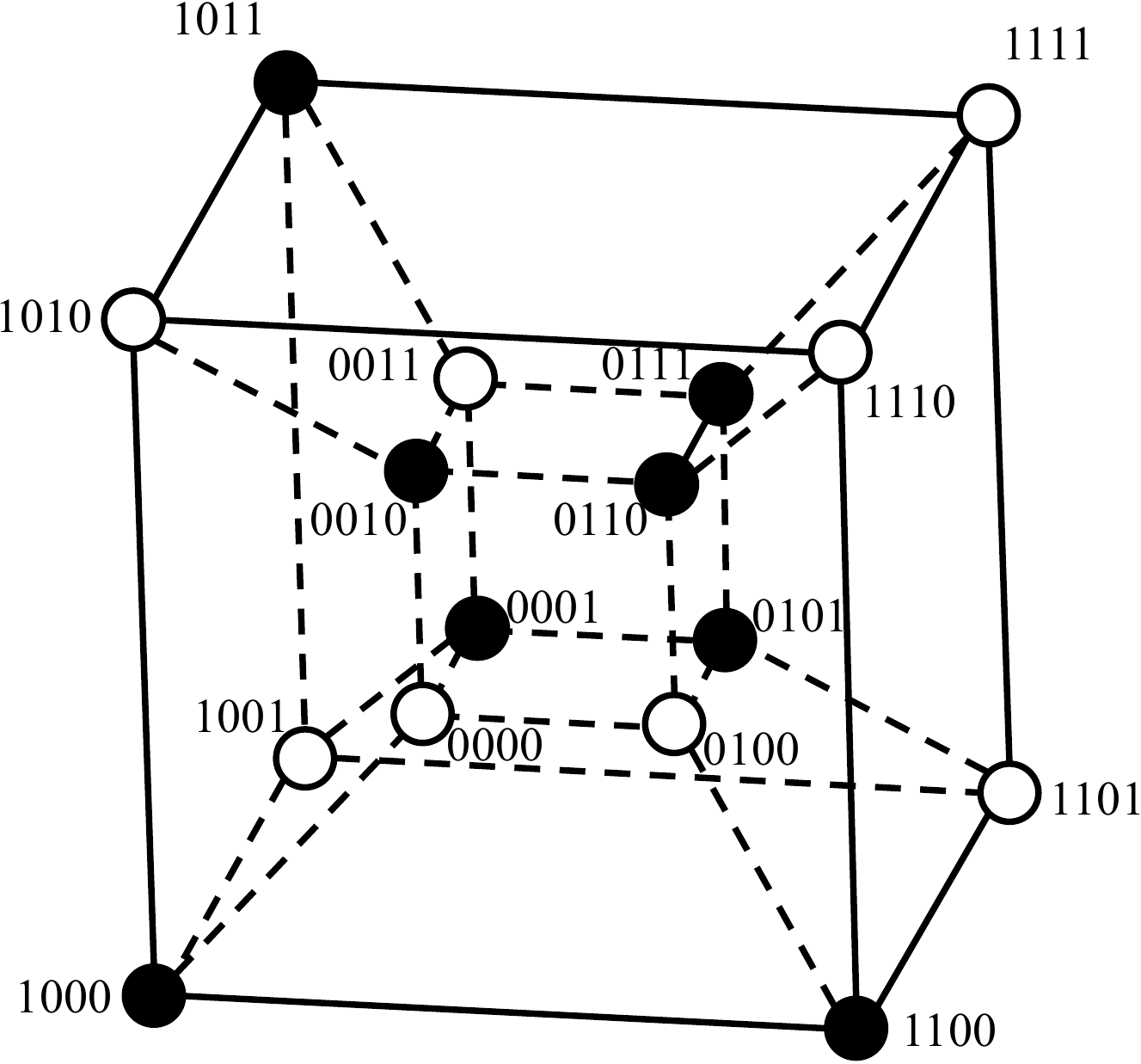}
\caption{$g_2$.}
\label{fig:4dimg}
\end{subfigure}
\begin{subfigure}[b]{0.49\columnwidth}
\centering
\includegraphics[scale=0.25]{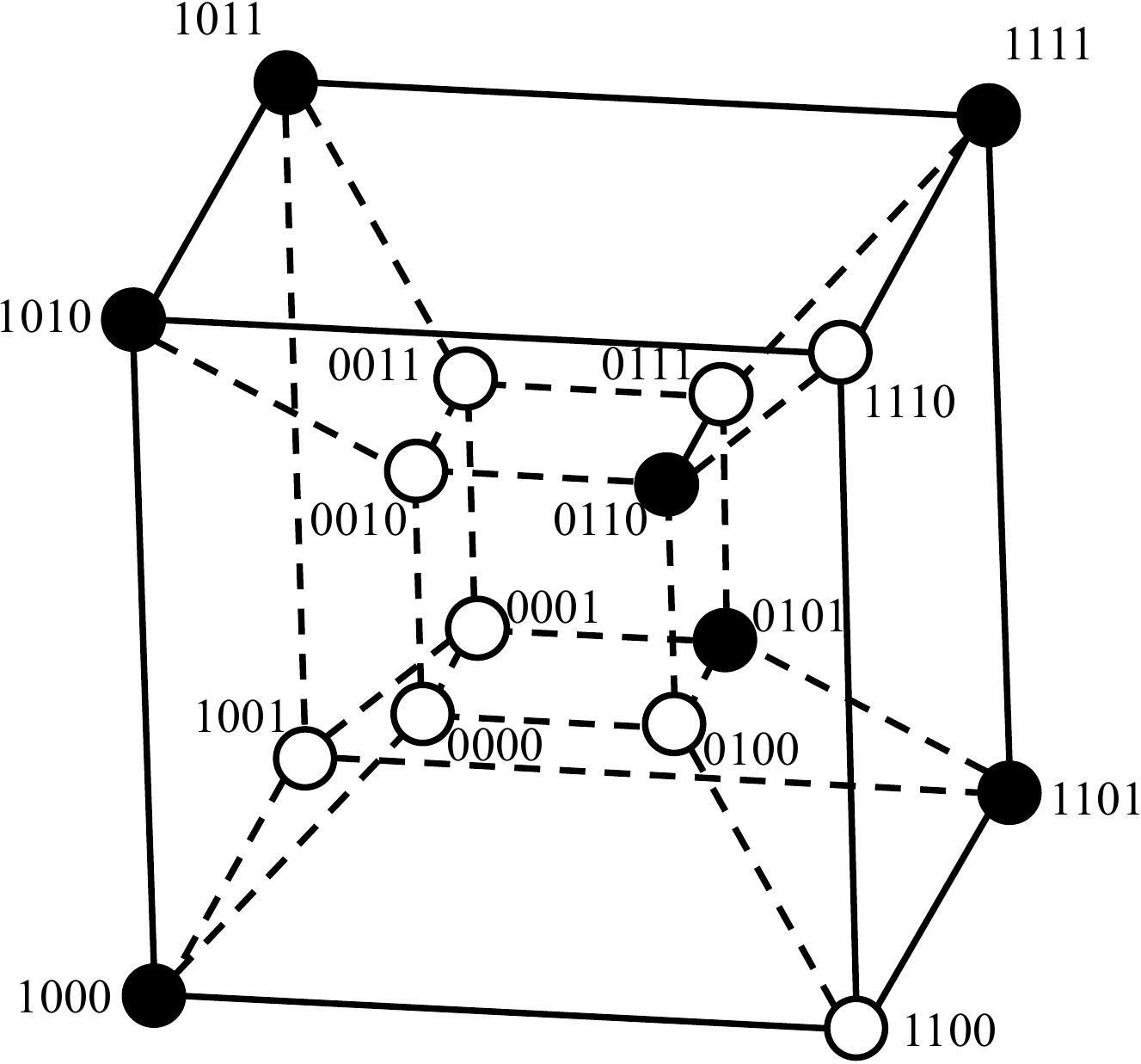}
\caption{$h_1$.}
\label{fig:4dimh}
\end{subfigure}
\begin{subfigure}[b]{0.49\columnwidth}
\centering
\includegraphics[scale=0.25]{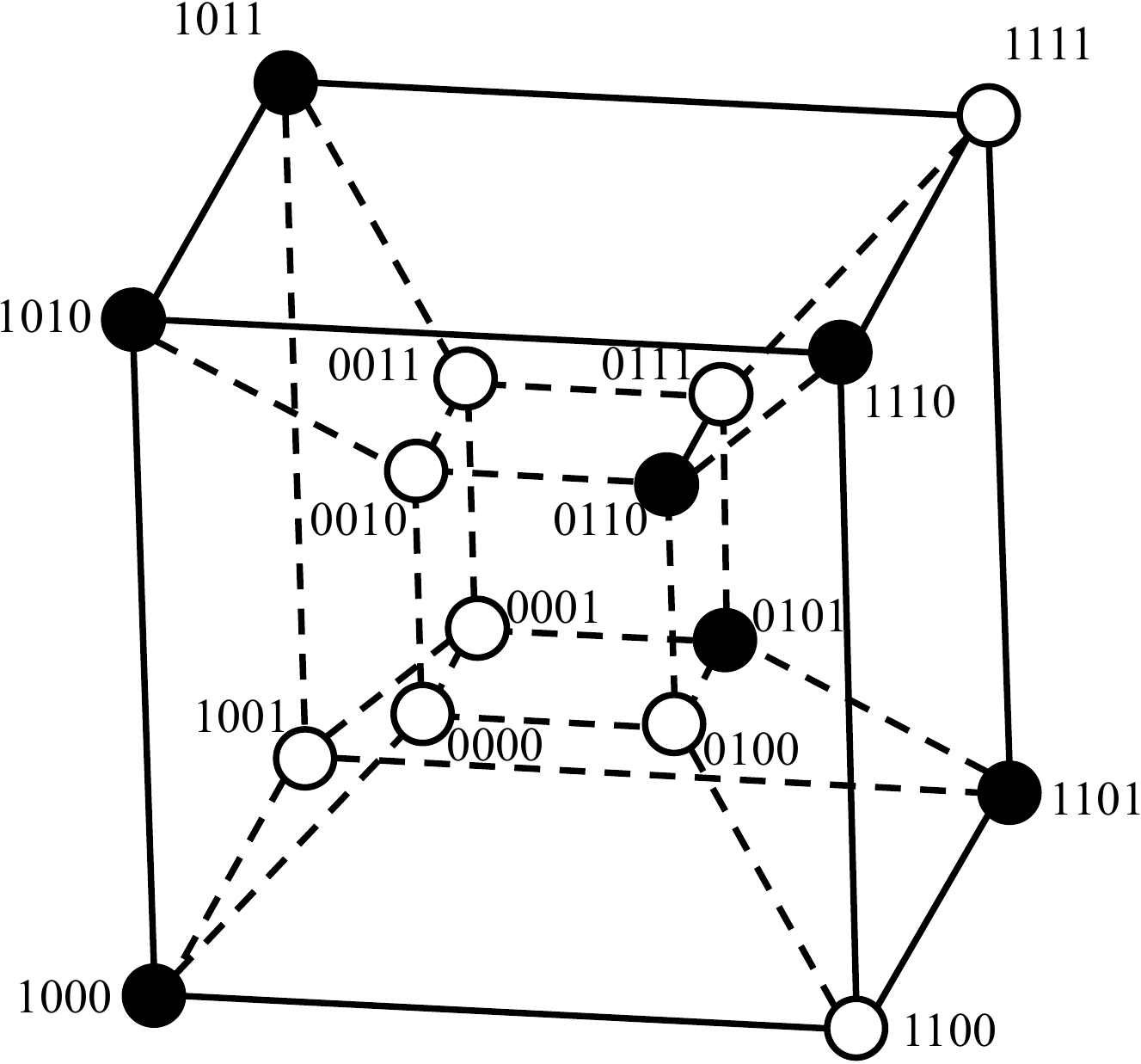}
\caption{$h_2$.}
\label{fig:4dimi}
\end{subfigure}
\caption{Hypercubes of two pairs of nonequivalent 4-input Boolean functions $g_1$, $g_2$ and $h_1$, $h_2$. }
\label{fig:4dim}
\end{figure*}

\section{Classifier}
\label{sec:canonical}
In this section,  we first show the effect of influence and sensitivity signature vectors on NPN classification and then present our classification algorithm.

\subsection{Signature Vectors Selection}
\label{sec:sigbasedcan}

All aries of cofactor signature vectors have been proved to be a canonical form~\cite{abdollahi2008symmetry}.
However, computing all-ary cofactor signatures are time-consuming.
Next, we will show that $OIV$, $OSV$, and $OSDV$ are strong discriminators of NPN non-equivalence over $OCV_1$ and $OCV_2$.

Fig.~\ref{fig:4dim} shows four hypercubes of two pairs of nonequivalent 4-input Boolean functions $g_1$, $g_2$ and $h_1$, $h_2$.
The $\mathit{OC}V_1(g_1)$=$\mathit{OC}V_1(g_2)$=$(3,4,4,4,4,4,4,5)$ and  $\mathit{OC}V_2(g_1)$=$\mathit{OC}V_2(g_2)$=$(1,1,1,2,2,2,2,2,2,2,2,2,2,2,2,\\2,2,2,2,2,2,3,3,3)$ are identical, respectively.
However, $OIV_1(g_1)$=$(6,6,6,8)$ and $OIV_2(g_2)$=$(2,6,6,8)$ of these two functions are different.
The ordered influence vector can distinguish nonequivalent Boolean functions that cannot be classified by $OCV_1$ and $OCV_2$.
Moreover, the $\mathit{OC}V_1(h_1)$=$\mathit{OC}V_1(h_2)$=$(2,3,3,3,4,4,4,5)$,   $\mathit{OC}V_2(h_1)$=$\mathit{OC}V_2(h_2)$=$(0,1,1,1,1,1,1,1,1,2,2,2,2,2,2,\\2,2,2,2,2,3,3,3,3)$ and $\mathit{OI}V_1(h_1)$=$\mathit{OI}V_1(h_2)$=$(3,5,5,5)$ are identical, respectively.
But $OSV^1(h_1)$=$(2,2,2,2,3,3,\\4)$ and $OSV^1(h_2)$=$(1,2,3,3,3,3,3)$ of these two functions are different.
Therefore, the ordered sensitivity vector can distinguish nonequivalent Boolean functions that can not be classified by $OCV_1$ and $OCV_2$.
More evaluations will be seen in Section~\ref{sec:evalsemi}.

\subsection{Classification Method}
\label{sec:computation}
As the property of cofactor, influence, and sensitivity mentioned above, it is very convenient to implement the computation of the signature based on the binary string as the representation. 
We adopt several bitwise operation techniques in~\cite{warren2013hacker} for fast signatures computation.
For example, to calculate the cofactor signature of a certain literal, we only need to keep the relevant bits in the truth table and count the number of 1s.

\setlength{\textfloatsep}{1pt}
\begin{algorithm}[tbp]
\caption{NPN Classifer}
\label{algorithm:npnclass}
\begin{algorithmic}[1]
    \Require input variables $n$, truth table set $tts$
    \Ensure NPN equivalent classes
    \For{$tt$ in $tts$}
        \State get $OCV_1(tt)$ and $OCV_2(tt)$; 
        \State get $OIV(tt)$;
        \State get $OSV(tt)$;
        \State compute $OSDV(tt)$ using $OSV(tt)$;
        \State construct $MSV(tt)$ using $OCV_1(tt)$, $OCV_2(tt)$, $OIV_1(tt)$, $OSV(tt)$ and $OSDV(tt)$;
        \State $class \gets hash(MSV(tt))$;
    \EndFor
\end{algorithmic}
\end{algorithm}

Algorithm~\ref{algorithm:npnclass} gives the overall algorithm for NPN classification. 
First, we compute several signature vectors as the definitions above~(line 2 to line 5).
Then, we construct the MSV (Mixed Signature Vector)~(line 6).
At last, a hash function is used to finish the classification~(line 7) of this Boolean function.
For runtime saving, we can use $OSV^1$, $OSV^0$ replacing $OSV$ and $OSDV^1$, $OSDV^0$ replacing $OSDV$.
When considering balanced Boolean functions, it should be noted that different $OSV$ need to be constructed according to the Theorem~\ref{Tm3}.
Most NPN classification methods define a canonical form utilizing cofactor signatures and hierarchical symmetry properties first and then propose a complex algorithm to compute it.
Boolean functions with different symmetric properties further increase the complexity.
Our classification method only needs bitwise operations and hash to finish the classification.

\section{Experimental Evaluations}
\label{sec:evaluation}

\begin{table*}
\caption{The results of classification using different signature vectors.}
\label{tab:canonform}
\centering
\small
\begin{tabular}{IcIcIcIcIcIcIcIcIcIcI}
\shline
n   & \makecell[c]{\#Exact \\Classes} & 
\makecell[c]{\#Classes \\ by $OIV$}  & 
\makecell[c]{\#Classes \\ by $OCV_1$}  & 
\makecell[c]{\#Classes \\ by $OSV$} & 
\makecell[c]{\#Classes by \\ $OIV$+$OSV$} & 
\makecell[c]{\#Classes by \\ $OCV_1$+$OSV$} & 
\makecell[c]{\#Classes by \\$OCV_1$+$OCV_2$\\+$OSV$} & 
\makecell[c]{\#Classes by \\ $OIV$+$OSV$\\+$OSDV$} & 
\makecell[c]{\#Classes \\by All} \\
\shline
4   & 49      & 28      & 41      & 48      & 48        & 49         & 49       & 49         &     \textbf{49}     \\
5   & 312     & 173     & 251     & 305     & 310       & 311        & 311      & 312        &     \textbf{312}    \\
6   & 1673    & 1175    & 1380    & 1619    & 1654      & 1668       & 1671     & 1673       &     \textbf{1673}   \\
7   & 6071    & 5224    & 5498    & 5985    & 6052      & 6057       & 6057     & 6052       &     \textbf{6071}   \\
8   & 48895   & 44497   & 44183   & 48584   & 48876     & 48876      & 48876    & 48877      &     48887  \\
9   & 92741   & 87485   & 87080   & 92381   & 92721     & 92723      & 92723    & 92721      &     92725  \\
10  & 184832  & 178155  & 177799  & 184428  & 184794    & 184794     & 184795   & 184796     &     184796 \\

\shline
\end{tabular}
\end{table*}

\subsection{Setup}
We implement our classifier algorithm in C++ and compare our work with some state-of-the-art NPN classification works.
All procedure runs on an Intel Xeon 2-CPU 20-core computer with 60GB RAM. 
The input of a benchmark consists of a list of Boolean functions (in the truth-table form) to be classified under NPN equivalence. 
And the output consists of the number of equivalence classes, as well as the classified truth tables.

We use EPFL benchmarks~\cite{amaru2015epfl} to test the effectiveness of our algorithm on real synthesis applications.
The truth tables are extracted from these benchmarks using cut enumeration.
We deleted the Boolean functions of the same truth table.

\subsection{Evaluation of the Signature Vectors}
\label{sec:evalsemi}
We evaluate the effectiveness of each signature vector part and different combinations.
Table~\ref{tab:canonform} shows the results.
The number of exact classes in the third column is run using Kitty when $n \leq 6$ and the exact version in~\cite{abc} when $n > 6$.
In general, cofactor signatures are more effective in classification than influence but worse than sensitivity.
The combination of influence and sensitivity signature vectors is better than cofactor signature vectors.
It performs an exact classification when $n \leq 7$.
Therefore, point characteristics and face characteristics have different properties, and their combination can effectively complete NPN classification. 

\begin{table*}[tbhp]
\caption{Runtime and accuracy comparison of different NPN classifiers.}
\label{tab:npnmatch}
\centering
\small
\begin{tabular}{IcIcIcIcIcIcIcIcIcIcIcIcIcI}
\shline
\multirow{2}{*}{n}   & \multirow{2}{*}{\#Func} & \multirow{2}{*}{\makecell[c]{\#Exact \\Class}}  & \multicolumn{2}{cI}{Kitty} &\multicolumn{2}{cI}{testnpn -6~\cite{huang2013fast}} & \multicolumn{2}{cI}{testnpn -7~\cite{petkovska2016fast}} &\multicolumn{2}{cI}{testnpn -11~\cite{zhou2020fast}} & \multicolumn{2}{cI}{\textbf{Ours}} \\
 \cline{4-13}
& &  &	\#Class	&  Time  &	\#Class  &	Time	 & \#Class & Time  &	\#Class	&  Time &	\#Class	&  Time \\
\shline
4 & 1146    & 49     & 49     & 0.031  & 251    & 0.0003  & -      & -         & 52        & 0.0024 & \textbf{49}     & \textbf{0.0013} \\
5 & 6824    & 312    & 312    & 0.858  & 1586   & 0.0026  & -      & -         & 322       & 0.015 & \textbf{312}    & \textbf{0.0049}\\
6 & 28672   & 1673   & 1673   & 39.453   & 7375   & 0.006  & 1752   & 0.021  & 1690      & 0.046 & \textbf{1673}   & 0.121 \\
7 & 80123   & 6071   & -      & -	        & 23318  & 0.216  & 6249   & 0.067  & 6115      & 0.194 & \textbf{6071}   & 0.773 \\
8 & 480516  & 48895  & -      & -	        & 190708 & 0.13  & 50066  & 0.554  & 49577     & 4.701    & 48887  & 12.350\\
9 & 691474  & 92741  & -      & -         & 278090 & 0.296  & 94283  & 1.438   & 93575     & 128.926  & 92725  & \textbf{51.86}\\
10 & 1153464 & 184832 & -      & -         & 500911 & 0.881  & 187117 & 4.193  & 186098    & 1329.995 & 184796 & \textbf{318.56}\\

\shline
\end{tabular}
\end{table*}

\subsection{Evaluation of NPN Classifier}
\label{sec:evalmatch}
We evaluate our classification algorithm and compare the results with Kitty in EPFL logic synthesis libraries~\cite{soeken2018epfl} and several state-of-the-art works~\cite{huang2013fast,petkovska2016fast,zhou2020fast}, which are implemented in ABC~\cite{abc} as command \cmd{testnpn} with different arguments.
Due to some methods in~\cite{zhou2020fast} using an exhaustive enumeration for exact classification at the end, we modified ABC and removed this part for a fair comparison.
Table~\ref{tab:npnmatch} shows the results.
Kitty can gain exact NPN classification, but it runs slowly and will not work when $n>6$.
The command \cmd{testnpn -7} fails when $n \leq 5$. 
So we omit these parts of the results.
The number of exact classes runs using Kitty when $n \leq 6$ and the exact version in~\cite{abc} when $n > 6$.
Our classification gains up to 325x speedup over Kitty~(6-bit) with the same classification accuracy.
Although the algorithm in~\cite{huang2013fast} is ultra-fast, it fails inaccurate classification.
Algorithms in~\cite{petkovska2016fast} and~\cite{zhou2020fast} show speed and classification accuracy improvements.

Previous classification methods need a canonical form and computation algorithm to get NPN classes.
The runtime of such methods is related to the properties of Boolean functions, such as symmetries.
If the Boolean functions have bad properties, then the computation of the canonical form will become complicated, leading to unstable runtime.
The classifier proposed in this paper needs only bitwise computation and hash operations, so the runtime is only related to the bitwidth and the total number of Boolean functions.
Fig.~\ref{fig:stable} shows the runtime of our classifier and \cmd{testnpn -11} testing on randomly generated 5-bit and 7-bit Boolean functions.
The horizontal axis is the number of generated Boolean functions.
We randomly generate a fixed number of Boolean functions with truth tables in consecutive binary encoding for each bit.
This figure shows that the runtime of our classifier is almost linear with the total number of Boolean functions, while \cmd{testnpn -11} fluctuates widely with different sets of Boolean functions.
Actually, our classifier cannot return exact matching solutions.
Influence and sensitivity still have great potential to be extended to the traditional method to achieve exact NPN classification, and we will explore them in the future.

\begin{figure}[tbp]
\centering
\includegraphics[scale=0.19]{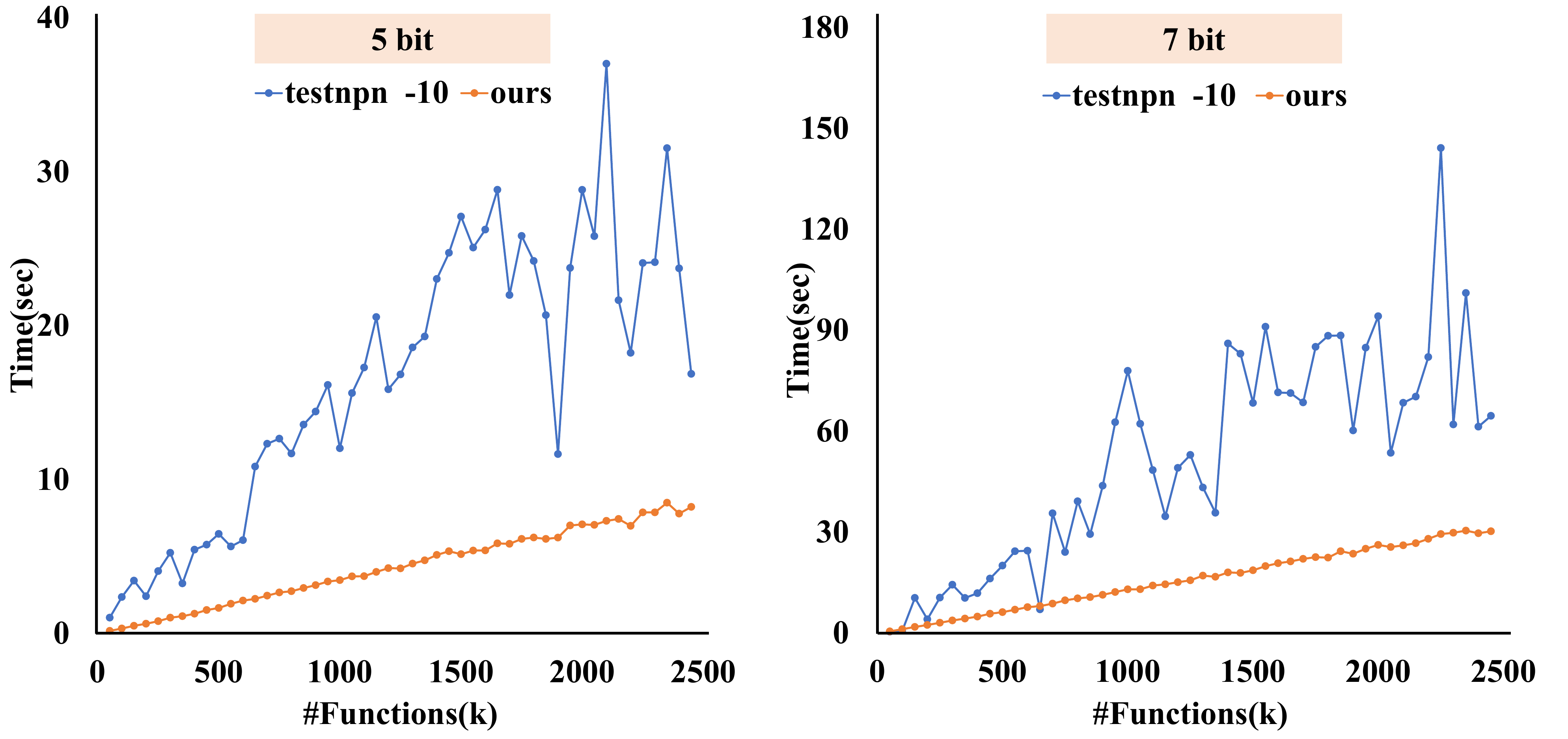}
\label{fig:6bit}
\caption{Our classifier has stable runtime.}
\label{fig:stable}
\end{figure}

\section{Conclusion}
\label{sec:conclusion}

This paper rethought the NPN classification problem in terms of face and point characteristics of Boolean functions.
We introduced Boolean sensitivity and influence, two concepts considering point and face-point characteristics in NPN classification.
We designed some signature vectors based on these two characteristics.
Combined with cofactor signatures, we develop a new classifier that only relies on signature vector computation.
The experiments showed that the proposed NPN classifier gains better NPN classification accuracy with comparable and stable speed.

\section*{Acknowledgment}
This work is partly supported by the National Natural Science Foundation of China (Grant No. 62090021) and the National Key R\&D Program of China (Grant No. 2022YFB4500500).
Shenggen Zheng acknowledges support in part from the Major Key Project of PCL.


\bibliographystyle{IEEEtran}
\bibliography{short}


\end{CJK*}
\end{document}